\newcommand{\e}{{\mbox{{\bf\em e}}}}
\renewcommand{\bigcirc}{{\circlearrowleft}}
\newcommand{\R}{{\mathbb{R}}}
\newcommand{\C}{{\mathbb{C}}}
\newcommand{\A}{{\mathcal A}}
\renewcommand{\O}{\mathcal{O}}
\renewcommand{\mod}{{\rm \;mod\;}}
\renewcommand{\r}{\sharp}
\newcommand{\sett}[2]{{\left\{#1 \ \mid \ #2\right\}}}
\newcommand{\nradix}[1]{{{\rm e}^{i\frac{2\pi}{n}{#1}}}}
\newcommand{\dfa}{\textsc{dfa}}
\newcommand{\dfas}{\textsc{dfa}s}
\newcommand{\qfa}{\textsc{qfa}}
\newcommand{\qfas}{\textsc{qfa}s}
\newcommand{\lqfa}{\textsc{lqfa}}
\newcommand{\lqfas}{\textsc{lqfa}s}
\newcommand{\mo}{\textsc{mo-qfa}}
\newcommand{\mos}{\textsc{mo-qfa}s}
\newcommand{\mm}{\textsc{mm-qfa}}
\newcommand{\mms}{\textsc{mm-qfa}s}
\newcommand{\norm}[1]{\left\|#1\right\|}
\newcommand{\inner}[2]{{\left\langle#1,#2\right\rangle}}
\newcommand{\cc}[1]{{#1}^*}
\newtheorem{thm}{Theorem}
\newtheorem{definition}{Definition}
\title{Latvian Quantum Finite State Automata\\for Unary Languages}
\author{Carlo Mereghetti\qquad\qquad Beatrice Palano\qquad\qquad Priscilla Raucci
\institute{Dipartimento di Informatica ``Giovanni Degli Antoni''\\
Universit\`a degli Studi di Milano, via Celoria 18, 20135 Milano -- Italy}
\email{\{carlo.mereghetti, beatrice.palano, priscilla.raucci\}@unimi.it}
}
\newcommand{\titlerunning}{Latvian Quantum Finite State Automata for Unary Languages}
\newcommand{\authorrunning}{C. Mereghetti, B. Palano \& P. Raucci}
\begin{document}
\maketitle

\begin{abstract}
We design \emph{Latvian quantum finite state automata} (\lqfas\ for short) recognizing unary regular languages with isolated cut point $\frac{1}{2}$.
From an architectural point of view, we combine two \lqfas\ recognizing with isolated cut point, respectively, the finite part and the ultimately periodic part of any given unary regular language~$L$. In both modules, we use a component addressed in the literature and here suitably adapted  to the unary case, to discriminate strings on the basis of their length. The number of basis states and the isolation around the cut point of the resulting \lqfa\ for $L$ exponentially depends on the size of the minimal deterministic finite state automaton for $L$.
\end{abstract}
\section{Introduction}\label{sec:intro}
Quantum finite automata (\qfas\ for short) represent a theoretical model for a quantum computer with finite memory
\cite{BMP14,BMP17}. While we can hardly expect to see a full-featured quantum computer in the near future, small quantum components, modeled by \qfas, seem to be promising from a physical implementation viewpoint (see, e.g., \cite{CMP21,MPC20}).

Very roughly speaking, a \qfa\ is obtained by imposing the quantum paradigm --- superposition, unitary evolution, observation  ---
to a classical finite state automaton. The state of the \qfa\ can be seen as a linear combination of classical states, called superposition. The \qfa\ steps from a superposition to the next one by a unitary (reversible) evolution. Superpositions can transfer the complexity of a computation from a large number of sequential steps to a large number of coherently superposed classical states (this phenomenon is sometimes referred as {quantum parallelism}). Along its computation, the \qfa\ can be ``observed'', i.e., some features, called observables, can be measured. From measuring an observable, an outcome is obtained with 
a certain probability and the current superposition irreversibly ``collapses'', with the same probability, to a particular superposition (coherent with the observed outcome).

\qfas\ exhibit both advantages and disadvantages with respect to their classical (deterministic or probabilistic) counterpart. Basically, quantum superposition offers some computational advantages on probabilistic superposition. On the other hand, quantum dynamics are reversible: because of limitation of memory, it is sometimes impossible to simulate deterministic finite state automata (\dfas\ for short) by quantum automata. Limitations due to reversibility can be partially attenuated by systematically introducing measurements of suitable observables as computational steps.

In the literature, several models of \qfas\ are proposed, which mainly differ in their measurement policy.
The first and most simple model is the {\em measure-once \qfa} (\mo\ for short) \cite{brodsky2002characterizations,MC00}, where
the probability of accepting strings is evaluated by ``observing'' just once, at the end of input processing. In
{\em measure-many} \qfas\ (\mms\ for short) \cite{kondacs1997power}, instead, the acceptance probability is evaluated by observing
after each move, thus allowing the possibility of halting the computation in the middle of input processing.
An additional model is the {\em Latvian \qfa} (\lqfa\ for short) \cite{Ambainis06}, which can be regarded as ``intermediate'' between \mos\ and \mms. In fact, as in the \mm\ model, \lqfas\ are observed after each move; on the other hand, as in the \mo\ model, acceptance probability is evaluated at the end of the computation only. From a language recognition point of view, it is well known that \mos\ are strictly less powerful than \lqfas, which are strictly less powerful than \mms, which are strictly less powerful than \dfas. This hierarchy is established, e.g., in \cite{Ambainis06,brodsky2002characterizations,kondacs1997power}.
\smallskip

In this paper, we investigate the architecture and size of \lqfas\ processing \emph{unary languages}, i.e. languages 
built over a single-letter alphabet. 
A similar investigation is presented in~\cite{Bianchi10}, where \mms\ recognizing unary regular languages with isolated cut point are exhibited, whose size (number of basis states) is linear in the size of equivalent minimal \dfas. Here, we show that unary regular languages can be recognized with isolated cut point by the less powerful model of \lqfas\ as well, paying by an exponential size increase. A relevant module in our construction is a \lqfa\ 
recognizing with isolated cut point the strings of length exceeding a fixed threshold. For its design, we adapt a construction in \cite{Ambainis06,Mercer07} to the unary case. Such a module is then suitably combined with two \lqfas\ taking care, respectively, of the finite part and the 
ultimately periodic part any unary regular language consists of. The architecture of the resulting \lqfa\ turns out to be significantly different from the equivalent \mms\ in~\cite{Bianchi10}. Moreover, while in the \mm\ case the isolation around the cut point is constant, for \lqfas\ it exponentially decreases with respect to the size of the \dfa\ for the finite part of the target unary regular language. However, it should be stressed that the less powerful model of \mos\ cannot recognize with isolated cut point all unary regular languages. Our results constructively prove that \lqfas\ and \mms\ have the same recognition power, whenever restricted to recognize unary languages with isolated cut point.
\smallskip

The paper is organized as follows. 
In Section \ref{sec:prel}, we overview basics on formal language theory, linear algebra, and quantum finite state automaton models. In Section \ref{sec:Mh_description}, we design isolated cut point \lqfas\ recognizing 
the strings whose length is greater than or equal to a fixed value. 
Then, in Section \ref{sec:unary_construction}, we provide the full architecture of isolated cut point \lqfas\ for unary regular languages, analyzing their size, cut point, and isolation.
Finally, in Section \ref{sec:conclusion}, we draw some concluding remarks and offer possible research hints.

	\section{Preliminaries}\label{sec:prel}
	
	\subsection{Formal Languages}\label{lang}
	We assume familiarity with basic notions of formal language theory (see, e.g., \cite{hopcroft2001introduction}).
Given a set $S$, we let $|S|$ denote its cardinality.
	The set of all words or strings (including the empty string $\varepsilon$) over a finite alphabet $\Sigma$ is denoted by $\Sigma^*$, and we let $\Sigma^+=\Sigma^*\setminus{\varepsilon}$. 
		For a string $w \in \Sigma^*$, we let $|w|$ denote its length and $w_i$ its {\em i}th symbol.
	For any given $i\ge0$, we let~$\Sigma^i$ be the set of strings over $\Sigma$ of length $i$, 
	with $\Sigma^0 = \{ \varepsilon\}$. 
	We let $\Sigma^{\leq i} = \bigcup_{j=0}^{i} \Sigma^j$; sets~$\Sigma^{>i}$ and $\Sigma^{\ge i}$ are defined accordingly.
	A language over $\Sigma$ is any subset $L\subseteq\Sigma^*$; its complement is the language $L^c=\Sigma^*\setminus L$.
	A~{\em deterministic finite state automaton} (\dfa) is formally defined as a 5-tuple $D = (Q, \Sigma, q_0, \delta, F)$,
	where~$Q$ is the finite set of states, $\Sigma $ the finite input alphabet, $q_0 \in Q$ the initial state, $F \subseteq Q$ the set of accepting states, and $\delta: Q \times \Sigma \rightarrow Q$ the transition function. 
	Denoting by $\delta^*$ the canonical extension of $\delta$ to~$\Sigma^*$, the language recognized by $D$ is the set $L_D = \{ w \in \Sigma^* | \delta^*(q_0, w) \in F\}$.
It is well known that \dfas\ characterize the class of regular languages.
\smallskip
	
	A {\em unary language} is any language built over a single letter alphabet, e.g., $\Sigma = \set{\sigma}$, and thus has the general form $L \subseteq \sigma^*$.  Unary {\em regular} languages form {\em ultimately periodic sets}, as stated by the following
	
	\begin{thm} \label{chrobak}
		\emph{(\cite{hopcroft2001introduction,Par66})} Let $L \subseteq \sigma^*$ be a unary regular language. 
		Then, there exist two integers $T \geq 0$ and $P > 0$ such that, for any $k \geq T$, we have $\sigma^k \in L$ if and only if $\sigma^{k+P} \in L$. 
	\end{thm}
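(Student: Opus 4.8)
The plan is to exploit the fact that a \dfa\ reading a unary alphabet behaves in an especially simple, eventually periodic way. Since $L\subseteq\sigma^*$ is regular, I would fix a \dfa\ $D=(Q,\{\sigma\},q_0,\delta,F)$ with $L_D=L$. On a single-letter alphabet the transition function collapses to a self-map $f\colon Q\to Q$ given by $f(q)=\delta(q,\sigma)$, and a trivial induction on $k$ shows $\delta^*(q_0,\sigma^k)=f^k(q_0)$. Thus the entire behaviour of $D$ is encoded by the orbit $q_0,f(q_0),f^2(q_0),\dots$ of the initial state under $f$.

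First I would observe that this orbit must eventually cycle. Since $Q$ is finite, the $|Q|+1$ states $f^0(q_0),f^1(q_0),\dots,f^{|Q|}(q_0)$ cannot be pairwise distinct, so by the pigeonhole principle there exist indices $0\le T<T+P\le|Q|$ with $f^T(q_0)=f^{T+P}(q_0)$; in particular $P>0$ and $T\ge0$. These are the two integers claimed in the statement.

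Next I would propagate this single coincidence along the whole tail of the orbit by induction on $k\ge T$. The base case $k=T$ is the equality just obtained. For the inductive step, applying $f$ to both sides of $f^k(q_0)=f^{k+P}(q_0)$ and using $f\circ f^m=f^{m+1}$ yields $f^{k+1}(q_0)=f^{k+1+P}(q_0)$. Hence $f^k(q_0)=f^{k+P}(q_0)$ for every $k\ge T$.

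Finally I would translate this back into a statement about membership. For any $k$ we have $\sigma^k\in L$ if and only if $\delta^*(q_0,\sigma^k)=f^k(q_0)\in F$; since $f^k(q_0)=f^{k+P}(q_0)$ for all $k\ge T$, the states tested for acceptance at lengths $k$ and $k+P$ coincide, giving $\sigma^k\in L$ if and only if $\sigma^{k+P}\in L$ for every $k\ge T$, as required. No step here is genuinely delicate: the only points deserving care are the appeal to pigeonhole, which guarantees that the threshold $T$ and the period $P$ can both be chosen finite (indeed bounded by $|Q|$), and the observation that the induction is valid precisely because $f$ is a \emph{function}, so equal states necessarily have equal successors.
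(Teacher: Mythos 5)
Your proof is correct. The paper does not prove this theorem itself --- it cites \cite{hopcroft2001introduction,Par66} --- and your pigeonhole-plus-induction argument on the orbit of $q_0$ under the single transition map is exactly the standard argument behind those citations, matching the ``initial path joined to a cycle'' picture of minimal unary \dfas\ that the paper invokes immediately after the theorem.
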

	
	By Theorem \ref{chrobak}, it is easy to see that any unary regular language $L$ can be recognized by a (minimal) $\dfa$ consisting of an initial path of $T$ states joined to a cycle of $P$ states; accepting states are suitably settled on both the path and the cycle. Unary regular languages satisfying Theorem \ref{chrobak} with $T=0$ are called {\em periodic languages} of period $P$.
	
	\subsection{Linear Algebra}\label{lin}

	We quickly recall some notions of linear algebra, useful to describe quantum computational devices. For more details, we refer the reader to, e.g., \cite{shilov1971linear}.
	The fields of real and complex numbers are denoted by~$\R$ and $\C$, respectively.
	Given a complex number $z=a+ib$, with $a,b\in\R$, its {\em conjugate} is denoted by $\cc{z}=a-ib$, and its {\em modulus} by $|z|=\sqrt{z\cdot\cc{z}}$.
	We let $\C^{n\times m}$ denote the set of $n\times m$ matrices with entries~in~$\C$.  
	Given a matrix $M\in\C^{n\times m}$, for $1\le i\le n$ and $1\le j\le m$, we denote by~$M_{ij}$  its $(i,j)$th entry. 
	The {\em transpose} of $M$ is the matrix $M^T\in \C^{m\times n}$ satisfying ${M^T}_{ij}=M_{ji}$, while we let~$\cc{M}$ be the matrix satisfying ${\cc{M}}_{ij}=\cc{(M_{ij})}$. The {\em adjoint} of $M$ is the matrix $M^\dag=\cc{(M^T)}$. 
	For matrices $A,B\in\C^{n\times m}$, their~\emph{sum} is
	the $n\times m$ matrix $(A+B)_{ij}=A_{ij}+B_{ij}$. For matrices $C\in\C^{n\times m}$ and
	$D\in\C^{m\times r}$, their \emph{product} is the~$n\times r$ matrix $(C\cdot D)_{ij}=\sum_{k=1}^{m}
	C_{ik}\cdot D_{kj}$.  
	For matrices $A\in\C^{n\times
		m}$ and $B\in\C^{p\times q}$, their \emph{direct (or tensor or Kronecker)
		product} is the $n\!\cdot\!\! p\times m\!\cdot\!q$ matrix defined as
	$$
	A\otimes B=\left(\begin{array}{ccc}
		A_{11}B & \cdots & A_{1m}B\\
		\vdots  & \ddots & \vdots\\
		A_{n1}B & \cdots & A_{nm}B
	\end{array}
	\right).
	$$
	When operations are allowed by matrix dimensions, we have that $(A\otimes B)\cdot(C\otimes D)= A\cdot C \;\otimes\;B\cdot D$.
	
	A \emph{Hilbert space} of dimension $n$ is the linear space $\C^{n}$ of $n$-dimensional complex row vectors equipped with sum and product by elements in $\C$, where the {\em inner product} $\inner{\varphi}{\psi}=\varphi\cdot\psi^\dag$ is defined, for vectors $\varphi,\psi\in\C^n$.
	The $i$th component of $\varphi$ is denoted by $\varphi_i$, its  {\em norm} is given by $\norm{\varphi}=\sqrt{\inner{\varphi}{\varphi}}=\sqrt{\sum_{i=1}^n{|\varphi_i|}^2}$.
	If
$\inner{\varphi}{\psi}=0$ (and $\norm{\varphi}=1=\norm{\psi}$), then $\varphi$ and~$\psi$ are {orthogonal} ({orthonormal}\/). An orthonormal basis of $\C^{n}$ is any set of $n$ orthonormal vectors in $\C^n$. In particular, the {\em canonical basis} of $\C^{n}$ is the set $\set{{\e}_1,\e_2,\ldots,\e_n}$, where $\e_i\in\C^n$ is the vector having 1 at the $i$th component and 0 elsewhere. 
Clearly, any vector $\varphi\in\C^n$ can be univocally expressed as a linear combination of the vectors in the canonical basis as $\varphi=\sum_{i=1}^n\varphi_i\cdot\e_i$. This latter fact is usually addressed by saying that $\C^n$ is {\em spanned} by $\set{{\e}_1,\e_2,\ldots,\e_n}$.
Two subspaces~$X,Y\subseteq\C^{n}$ are
orthogonal if any vector in $X$ is orthogonal to any vector in $Y$.
In this case, we denote by $X\dotplus Y$ the linear space generated by $X\cup Y$.
	 For vectors $\varphi\in\C^{n}$
and $\psi\in\C^{m}$, 
their {direct (or tensor or Kronecker) product} is the vector
$\varphi\otimes\psi =(\varphi_1\cdot\psi,\ldots,\varphi_n\cdot\psi)$;
we have $\norm{\varphi\otimes\psi}=\norm{\varphi}\cdot\norm{\psi}$.

	A matrix $M\in\C^{n\times n}$ is said to be {\em unitary} if $M\cdot M^\dag=I^{(n)}=M^\dag\cdot M$, where~$I^{(n)}$ is the $n\times n$ identity matrix.
	Equivalently, $M$ is unitary if it preserves the norm, i.e., $\norm{\varphi \cdot M}=\,\norm{\varphi}$ for any vector $\varphi\in\C^{n}$. Direct products of unitary matrices are unitary as well.
	The matrix $M$~is said to be {\em Hermitian (or self-adjoint)} if $M=M^\dag$. 
	Let $\O\in\C^{n\times n}$ be an Hermitian matrix, $\nu_1,\nu_2,\ldots, \nu_s$ its eigenvalues, and $E_1, E_2, \ldots, E_s$ the corresponding eigenspaces. 
	It is well known that each eigenvalue~$\nu_k$ is real, that $E_i$ is orthogonal to~$E_j$ for every $1\le i\neq j\le s$, and  that $E_1\dotplus E_2\dotplus\cdots\dotplus E_s=\C^{n}$.  
	Thus, every vector $\varphi\in\C^{n}$ can be uniquely decomposed as $\varphi=\varphi_{(1)}+\varphi_{(2)}+\cdots+\varphi_{(s)}$, for unique $\varphi_{(j)}\in E_j$. The linear
	transformation $\varphi\mapsto \varphi_{(j)}$ is the \emph{projector} $P_j$ onto the
	subspace~$E_j$. Actually, the Hermitian matrix $\O$ is biunivocally determined by
	its eigenvalues and  projectors as $\O=\nu_1\cdot P_1 + \nu_2\cdot P_2 + \dots + \nu_s\cdot P_s$. 
	We recall that a matrix $P \in  \C^{n \times n}$ is a projector if and only if $P$ is Hermitian and idempotent, i.e. $P^2 = P$. 	 
	
	Let $\omega=\nradix{}$ be the $n$th root of the unity ($\omega^n=1$) and define the Vandermonde matrix $W\in\C^{n\times n}$ whose $(r,c)$th component is $\omega^{rc}$, for $0\leq r,c<n$. Let the $n\times n$ complex matrix
	$F_n=\frac{1}{\sqrt{n}}\cdot W.$
	It is easy to see that $F_n$ is the unitary matrix implementing the \emph{quantum Fourier transform}.
	Throughout the paper, it will be useful to recall that operating $F_n$ on the $j$th vector of the canonical basis yields the vector
	$\e_j\cdot F_n = \frac{1}{\sqrt{n}}\cdot (\omega^0,\omega^{(j-1)\cdot 1},\ldots,\omega^{(j-1)\cdot(n-1)})$. We remark that $|(\e_j\cdot F_n)_k|^2=\frac{1}{n}$, for every $1\le k\le n$.
	
	As we will see in the next section, in accordance with quantum mechanics principles (see, e.g., \cite{Hu92}), the state of a quantum finite state automaton at any given time during its computation is represented by a norm 1 vector from a suitable Hilbert space, the state evolution of the automaton is modeled by unitary matrices, and information on certain characteristics of the automaton are probabilistically extracted by measuring some ``observables'' represented by Hermitian matrices.
	
	\subsection{Quantum Finite State Automata}\label{lqfa}
	Here, we recall the model of a {\em Latvian quantum finite state automaton} \cite{Ambainis06} we are mostly interested~in. 
	We then quickly introduce {\em measure-once} quantum finite state automata \cite{brodsky2002characterizations,MC00} as a particular case of Latvian automata. Finally, we overview {\em measure-many}  quantum finite state automata \cite{kondacs1997power}.
\begin{definition}
Let $\Sigma$ be an input alphabet, $\r\notin\Sigma$ an endmarker symbol, and set $\Gamma=\Sigma\cup\{\r\}$.
A \emph{Latvian quantum finite automaton} {\em (}\lqfa\ {\em for short)} is a system
$
\A = ( Q, \Sigma, \pi_0, \{U(\sigma)\}_{\sigma\in\Gamma}, \{\O_{\sigma}\}_{\sigma\in\Gamma}, Q_{acc} )
$,~where
\begin{itemize}
\item $Q=\{q_1, q_2, \dots, q_n\}$ is the finite set of basis states; the elements of $Q$ span\footnote{We can associate with the set $Q=\{q_1, q_2, \dots, q_n\}$ of basis states the canonical basis $\set{\e_1,\ldots,\e_n}$ of the Hilbert space $\C^n$ (see Section \ref{lin})
where, for each $1\le i\le n$, we let $\e_i$ represent the basis state $q_i$. As the canonical basis spans $\C^n$, with a slight abuse of notation, we say that the elements of $Q$ spans $\C^n$.} the Hilbert space $\C^n$,
\item $Q_{acc}\subseteq Q$ is the set of accepting basis  states,
\item $\pi_0\in\C^n$ is the initial amplitude vector (superposition) satisfying $\|\pi_0\|=1$,
\item $U(\sigma)\in\C^{n\times n}$ is the unitary evolution matrix, for any $\sigma\in\Gamma$,
\item for any $\sigma\in\Sigma$, we let $\O_\sigma=\sum_{i=0}^{k_\sigma-1}c_i(\sigma)\cdot P_i(\sigma)$ be an observable (Hermitian matrix) on $\C^n$, where $\{c_0(\sigma),\ldots,c_{k_\sigma-1}(\sigma)\}$ is the set of all possible outcomes (eigenvalues) of measuring~$\O_\sigma$, and $\{P_0(\sigma),\ldots,P_{k_\sigma-1}(\sigma)\}$ are the projectors onto the corresponding eigenspaces,
\item we let $\O_\r=a\cdot P_{acc}(\r)+r\cdot (I^{(n)}-P_{acc}(\r))$ be the final observable, where $P_{acc}(\r)$ is the projector onto the subspace of $\C^n$ spanned by the states in $Q_{acc}$.
\end{itemize}
\end{definition}
\noindent Let us briefly describe the behavior of $\A$ on an input word  $w \r \in \Sigma^*\r$. 
	At any given time, the \emph{state} of~$\A$ is a {\em superposition of basis states} in $Q$ which is represented by a norm 1 vector $\xi\in\C^n$. We have that~$\xi_i\in\C$ is the \emph{amplitude} of the basis state $q_i$, while $|\xi_i|^2\in[0,1]$ is the \emph{probability} of observing $\A$ in the basis state~$q_i$.
	The computation of $\A$ on $w \r$ starts in the initial superposition $\pi_0$ by reading the first input symbol. Then, the transformations associated with each input symbol are applied in succession. 
	The transformation corresponding to a symbol $\sigma \in \Gamma$ consists of two steps:
	\begin{enumerate}
		\item {\em Evolution:} the matrix $U(\sigma)$ acts on the current state $\xi$ of $\A$, yielding the next state $\xi' = \xi \cdot U(\sigma)$.
		\item {\em Observation:} the observable $\O_\sigma$ is measured and the outcome $c_i(\sigma)$
		is seen with probability $\left\| \xi'\cdot P_i(\sigma) \right\|^2$; upon seeing $c_i(\sigma)$, according to 
		Copenhagen interpretation of quantum mechanics~\cite{Hu92}, the state of
		 $\A$ "collapses" to (norm 1) state $\xi'\cdot P_i(\sigma)/\left\| \xi'\cdot P_i(\sigma) \right\|$ and the computation continues, unless we are processing the endmarker~$\r$.
	\end{enumerate}
	Upon processing the endmarker $\r$, the final observable $\O_\r$ is measured yielding the probability of seeing~$\A$ in an accepting basis state. Therefore, the probability of accepting $w\in\Sigma^*$ is given by
	$$
	p_\A(w)=\sum_{i_1=0}^{k_{w_1}-1} \cdots \sum_{i_{|w|}=0}^{k_{w_{|w|}}-1}
	\left\| \pi_0\cdot U(w_1)\cdot P_{i_1}(w_1)\cdot\, \cdots\,\cdot U(w_{|w|})\cdot P_{i_{|w|}}(w_{|w|})\cdot U(\r)\cdot P_{acc}(\r) \right\|^2.
	$$
	The function $p_\A: \Sigma^* \rightarrow [0, 1]$ is the {\em stochastic event induced by $\A$}.
	The \emph{language recognized by~$\A$ with cut point $\lambda\in [0, 1]$} is the set of words $L_{\A,\lambda} =\sett{w\in\Sigma^*}{p_\A(w)>\lambda}.$ 
	The cut point $\lambda$ is said to be \emph{isolated} whenever there exists $\varrho\in \left(0,\frac{1}{2}\right]$ such that $|p_\A(w)-\lambda|\geq\varrho$, for any $w \in \Sigma^*$. The parameter $\rho$ is usually referred to as {\em radius of isolation}.
	
	 In general, a language
	$L\subseteq\Sigma^*$ is recognized with isolated cut point by a \lqfa\ whenever there exists a \lqfa\ $\A$ such that \  $(\inf\sett{p_\A(w)}{w\in L}-\sup\sett{p_\A(w)}{w\not\in L})>0$. In this case, we can compute the cut point as being $\lambda=\frac{1}{2}\cdot\left(\inf\sett{p_\A(w)}{w\in L}+\sup\sett{p_\A(w)}{w\not\in L}\right)$, with radius of isolation $\varrho= \frac{1}{2}\cdot(\inf\sett{p_\A(w)}{w\in L}-\sup\sett{p_\A(w)}{w\not\in L})$. Throughout the rest of the paper, for the sake of conciseness, we will sometimes be writing ``isolated cut point quantum finite automaton for a language'' instead of ``quantum finite automaton recognizing a language with isolated cut point''.
	Isolated cut point turns out to be one of the main language recognition policies within the literature of probabilistic devices.
	Its relevance in the realm of finite state automata is due to the fact that we can arbitrarily reduce the classification error probability of an input word $w$ by repeating a constant number of times (not depending on the length of $w$) its parsing and taking the majority of the answers. We refer the reader to ,e.g., \cite[Sec.~5]{Ra63}, where the notion of isolated cut point recognition is introduced and carefully analyzed.
	\smallskip
	
	One of the two original and most studied models of a quantum finite state automaton is the {\em measure-once} model  (\mo\ for short). An \mo\ can be seen as a particular \lqfa\ where, for any $\sigma\in\Sigma$, we have that $\O_\sigma=I^{(n)}$. Basically, this amounts to leave the computation of $\A$ undisturbed up to the final observation for acceptance. Thus, an \mo\ can be formally and more succinctly written as $\A = (Q,\Sigma,\pi_0, \set{U(\sigma)}_{\sigma \in \Gamma}, Q_{acc})$. The probability of $\A$ accepting the word $w\in \Sigma^*$ now simplifies as
	\smallskip
	
		$
		~~~~~~~~~~~~~~~~~~~~~~~~~~~~~~~~~~~~~p_\A(w) = 	\left\| \pi_0 \cdot U(w_1)\cdot\,\cdots\,\cdot U(w_{|w|})\cdot U(\r)\cdot P_{acc}(\r) \right\|^2.
		$
\smallskip
	
	Let us now switch to the other original model, namely a {\em measure-many quantum finite state automaton} (\mm\ for short). 
	Roughly speaking, an \mm\ $\A$ is defined as \lqfa\ but with the possibility of accepting/rejecting the input string \emph{before} reaching the endmarker. More precisely, the set~$Q$ of the basis states of $\A$ can be partitioned into {\em halting states}, which can be either {\em accepting} or {\em rejecting}, and {\em non halting states}, also called  {\em go} states, i.e., $Q=Q_{acc}\cup Q_{rej}\cup Q_{go}$. Following such a state partition, the sole observable 
	$\O = a \cdot P_{acc} + r \cdot P_{rej} + g \cdot P_{go}$, whose projectors map onto the subspaces spanned by the corresponding basis states, is associated with \emph{each} symbol in $\Gamma$. At each step, the observable~$\O$ is measured and the computation of $\A$ continues (unless we are processing $\r$) only if the outcome $g$ is seen. Instead, if the outcome $a$ ($r$) is seen, then $\A$ halts and accepts (rejects). Formally, the \mm\ $\A$ can be written as  $\A = (Q,\Sigma,\pi_0, \{U(\sigma)\}_{\sigma\in \Gamma}, \O, Q_{acc})$, and the probability of accepting the word $w\r=w_1\cdots w_nw_{n+1}$ is
	\smallskip
	
	$
	~~~~~~~~~~~~~~~~~~~~~~~~~~~~~~~~~~~p_\A(w) = 	\sum_{k=1}^{n+1} \| \pi_0 \cdot ( \prod_{i=1}^{k-1} U(w_i)\cdot P_{go})\cdot  U(w_k)\cdot P_{acc} \|^2.
	$
	\smallskip
		
	It is well known (see, e.g., \cite{brodsky2002characterizations,MC00}) that the class of languages recognized by isolated cut point \mos\ coincides with the class of group languages. Notice that finite languages are not group languages, and hence they cannot be accepted by isolated cut point \mos. Isolated cut point
	 \lqfas\ are proved in \cite{Ambainis06,Mercer07} to be strictly more powerful than isolated cut point \mos, since their recognition power coincides with the class of block group languages. An equivalent characterization states that a language is recognized by an isolated cut point \lqfa\ if and only if it belongs to the boolean closure of languages of the form $L_1a_1L_2a_2\cdots a_kL_{k+1}$, for $a_i\in\Sigma$, group language $L_i\subseteq\Sigma^*$, and $|\Sigma|>1$.
Finally, the recognition power of isolated cut point \mms\ still remains an open question. However, it is know that \mms\ are strictly more powerful than \lqfas\ but strictly less powerful than \dfas. 
In fact, isolated cut point \mms\  can recognize the language $a\Sigma^*$ which cannot be accepted by isolated cut point \lqfas~\cite{Ambainis06}. On the other hand, isolated cut point \mms\  cannot recognize the language $\Sigma^*a$, for $|\Sigma|>1$ and $a\in\Sigma$~\cite{kondacs1997power}.
	
	\section{Isolated Cut Point \lqfas\ for Words Longer than $T$}\label{sec:Mh_description}

Here, we design an isolated cut point \lqfa\ recognizing the unary language $\sigma^{\ge T}$,
for any given~\mbox{$T>0$} (i.e., the set of unary strings whose length is greater than or equal to $T$). As it will be clear in the next section, this \lqfa\ will be a relevant component in the modular construction of isolated cut point \lqfas\ for unary regular languages.

Our design pattern is inspired by \cite{Ambainis06,Mercer07}, where the authors provide an isolated cut point \lqfa\ for the language $\Sigma^* a_1 \Sigma^*a_2  \cdots\,  a_k \Sigma^*$, with $a_i\in\Sigma$ and $|\Sigma| >1$.
So, we focus on recognizing the unary version of $\Sigma^* a_1 \Sigma^*a_2  \cdots\,  a_T \Sigma^*$ yielded by fixing $a_1 = \dots =  a_T =\sigma$ and $\Sigma=\set{\sigma}$, namely, the desired language $a_1\cdots a_T\Sigma^*=\!\sigma^{\ge T}$.
	We adapt the construction in \cite{Ambainis06,Mercer07}, and inductively exhibit a family~$\set{M^{(\ell)}}_{\ell\ge1}$
	of~\lqfas\ such that: {\em (i)} $M^{(\ell)}$ recognizes the language $\sigma^{\ge \ell}$ with isolated cut point, and {\em (ii)} $M^{(\ell)}$ is constructed by ``expanding'' $M^{(\ell-1)}$. So,
the desired isolated cut point \lqfa\ for $\sigma^{\ge T}$ will result after $T$ ``expansions'', starting from the \lqfa~$M^{(1)}$. We provide a detailed analysis of the stochastic behavior of~$M^{(\ell)}$ machines, emphasizing cut points, isolations and their size (i.e., number of their basis states).
	In this section, to have a convenient notation, we will be using $A_\sigma$ for the evolution operator of our \lqfas.
\medbreak

\noindent {\bf Base of the construction:} 
	For the induction base, we define the \lqfa\ $M^{(1)}$ for the language $ \sigma^{\ge1}$ as
	\smallskip
	
	$
	~~~~~~~~~~~~~~~~~~~~~~~~~~~~~~~~M^{(1)}=(Q^{(1)}, \set{\sigma}, \pi_0, \{A^{(1)}_\sigma, A^{(1)}_\r \}, \{\O_\sigma^{(1)}, \O_\r^{(1)}\}, Q_{acc}^{(1)}),
	$
	\smallskip
	
	\noindent where
	$Q^{(1)}= \{q_0, \dots, q_{n-1}\}$ is the set of $n$ basis states,
	$\pi_0=\e_1$  meaning that $M^{(1)}$ starts in the state $q_0$, 
	$Q_{acc}^{(1)} = Q^{(1)} \backslash \set{q_0}$ is the set of $n-1$ accepting states. For the evolution matrices, we let $A^{(1)}_\sigma= F_n$ (the quantum Fourier transform) and $A^{(1)}_{\r}= I$ (the identity matrix). The observable $\O_\sigma^{(1)}$ is the {\em canonical observable} defined by the projectors  $ \{ {\e_1}^\dag \cdot \e_1,\; {\e_2}^\dag \cdot \e_2,\; \dots, {\e_n}^\dag \cdot \e_n\}$. 	
	By measuring $\O_\sigma^{(1)}$ on $M^{(1)}$ being in the superposition $\xi\in\C^n$, we will see $M^{(1)}$
	in the basis state $q_{i-1}$ with probability $\|\xi\cdot({\e_i}^\dag \cdot \e_i) \|^2=|\xi_i|^2$. Upon such an outcome,
	the state of $M^{(1)}$ clearly collapses to $\e_i$. The final observation $\O_\r^{(1)}$ projects onto the subspace spanned by the accepting basis states $\set{q_1,\ldots,q_{n-1}}$.

	The automaton $M^{(1)}$ behaves as follows: when the first input symbol is read, the state of $M^{(1)}$ becomes
	$\pi_0\cdot A^{(1)}_\sigma=\e_1\cdot F_n$, upon which the canonical observation is measured. As noticed at the end of Section \ref{lin}, such a measurement will cause $M^{(1)}$ to move from $q_0$ to some basis state $q_i$, with $0\le i \le n-1$, uniformly at random (i.e., with probability $|(\e_1\cdot F_n)_{i+1}|^2=\frac{1}{n}$). After processing (again, by quantum Fourier transform followed by measuring the canonical  observable) the next input symbol from being in the state $\e_i$, we again find $M^{(1)}$ in a basis state uniformly at random. Such a dynamics continues unaltered, until the endmarker is reached and processed by the identity matrix. At this point, the final observation~$\O_\r$ is measured, and an accepting state is easily seen to be reached with probability $|Q_{acc}^{(1)}| \cdot \frac{1}{n}=(\frac{n-1}{n})$. Clearly, processing the empty string leaves $M^{(1)}$ in the non accepting state $q_0$ with certainty. Therefore, $p_{M^{(1)}}(\varepsilon)=0$, while for $k>0$ we have $p_{M^{(1)}}(\sigma^k)=(\frac{n-1}{n})$. So, $M^{(1)}$ recognizes the language $\sigma^{\ge 1}$ with isolated cut point.
\medbreak

\noindent {\bf Inductive step of the construction:}
	For the inductive step, we show how to build the isolated cut point \lqfa\ $M^{(\ell)}$ for the language $\sigma^{\ge \ell}$ from the \lqfa\ $M^{(\ell-1)}$ for the language $\sigma^{\ge \ell-1}$, this latter \lqfa\ being given by inductive hypothesis. We define
\smallskip

$~~~~~~~~~~~~~~~~~~~~~~~~~~~~~~~~M^{(\ell)}=(Q^{(\ell)}, \set{\sigma}, \pi_0, \{A^{(\ell)}_\sigma, A^{(\ell)}_\r \}, \{\O_\sigma^{(\ell)}, \O_\r^{(\ell)}\}, Q_{acc}^{(\ell)}),$
\smallskip

\noindent where the set $Q^{(\ell)}$ of basis states consists of the previous set $Q^{(\ell-1)}$ of basis states,
	plus $(n-1)$ new basis states per each state in $Q_{acc}^{(\ell-1)}$.
		 We let $Q_{acc}^{(\ell)}$ be the set containing these $(n-1)\cdot|Q_{acc}^{(\ell-1)}|$ new states, with 
		 $|Q_{acc}^{(\ell)}|=(n-1)^{\ell}$. Therefore,
 $Q^{(\ell)}= Q^{(\ell-1)} \cup Q_{acc}^{(\ell)}=\{q_0\} \cup Q_{acc}^{(1)} \cup Q_{acc}^{(2)} \cup \dots  \cup Q_{acc}^{(\ell)}$ with $|Q_{acc}^{(i)}|=(n-1)^i$,
so that  $|Q^{(\ell)}|=\sum_{i=0}^\ell (n-1)^i=\frac{(n-1)^{(\ell+1)}-1}{n-2}$.	The initial superposition is $\pi_0=\e_1$.
We let $A_\r^{(\ell)}=I$ and $A^{(\ell)}_\sigma = B^{(\ell)}\cdot \tilde{A}^{(\ell-1)}$, where $\tilde{A}^{(\ell-1)}$ is the transformation acting as $A^{(\ell-1)}_\sigma$ on $Q^{(\ell-1)}\subset Q^{(\ell)}$, and as the identity elsewhere. Instead,
		$B^{(\ell)}$ is an additional operator working as follows. 
		For any~$\tilde{q} \in Q^{(\ell-1)}_{acc}$, let $Q_{\tilde{q}}= \{ \tilde{q}_1, \dots ,\tilde{q}_{n-1}\} \subset Q^{(\ell)}_{acc}$ be the set of the $n-1$ new added accepting states associated with~$\tilde q$. 
		Thus, the operator $B^{(\ell)}$ first acts as $F_n$ on $\set{\tilde{q}} \cup Q_{\tilde{q}}$ for every
		$\tilde q\in Q_{acc}^{(\ell-1)}$, then it measures $\O_\sigma^{(\ell)}$ being the canonical observable on $Q_{acc}^{(\ell-1)}\cup Q_{acc}^{(\ell)}$ plus the identity projector on the remaining basis states. The final observable $\O_\r^{(\ell)}$ as usual projects onto the subspace spanned by $Q_{acc}^{(\ell)}$.
Actually, the automaton so far constructed does not perfectly comply with the definition of a \lqfa\ given in Section \ref{lqfa} since $A_\sigma^{(\ell)}$ is not a unitary matrix. However, \mbox{\cite[Claim~1]{Ambainis06}}
		ensures that the action of the operator $B^{(\ell)}\cdot \tilde{A}^{(\ell-1)}$ followed by measuring $\tilde \O_\sigma^{(\ell-1)}$ (the observable of~$M^{(\ell-1)}$ extended to  $Q^{(\ell)}$ by the identity projector onto $Q^{(\ell)}_{acc}$) can be expressed as a unitary matrix followed by measuring a suitable observable. This last detail possibly enlarges the dimension of the Hilbert space for~$M^{(\ell)}$ by a factor bounded by $n^\ell$. The stochastic event induced by $M^{(\ell)}$ will be discussed later.
		\smallskip

	To clarify the architecture and behavior of this family of automata, we now describe the \lqfa~$M^{(3)}$ recognizing the language $\sigma^{\ge3}$ with isolated cut point. We have 
	\smallskip
	
	$~~~~~~~~~~~~~~~~~~~~~~~~~~~~~~~~M^{(3)} = (Q^{(3)}, \set{\sigma}, \pi_0, \{A^{(3)}_\sigma, A^{(3)}_\r \}, \set{\O_\sigma^{(3)}, \O_\r^{(3)}}, Q^{(3)}_{acc}),$
	\smallskip
	
	\noindent where we let the set of basis states be 
	$Q^{(3)} = \{q_0\} \cup Q_{acc}^{(1 )}\cup Q_{acc}^{(2)} \cup Q_{acc}^{(3)}$ with
	$Q_{acc}^{(1 )}=\sett{q_i}{1\le i\le n-1}$, 	$Q_{acc}^{(2 )}=\sett{q_{i,j}}{1\le i,j\le n-1}$, and
	$Q_{acc}^{(3 )}=\sett{q_{i,j,k}}{1\le i,j,k\le n-1}$. We remark that $ Q_{acc}^{(3)}$ is the set of $(n-1)^3$ accepting basis states of $M^{(3)}$. 
We can regard basis states as partitioned into three groups reflected by the number of subscripts attributed to each basis state; 
each group of states is added in a subsequent step of the inductive construction.
	The general structure of the state (superposition) of $M^{(3)}$ is a norm~1 vector in $\C^{|Q^{(3)}|}$ of the following form, with $\alpha(q)$ denoting the amplitude of the basis state $q$: 
	{\small \begin{align*}
		[\alpha(q_0), \alpha(q_1), \alpha(q_{1,1}), [\ldots\, \alpha(q_{1,1,k})\, \ldots], \alpha(q_{1,2}), [\ldots\,\alpha(q_{1,2,k})\, \ldots], \; \dots \; , \alpha(q_{1,n-1}), [\ldots\,\alpha(q_{1,n-1,k}) \,\ldots],\\
		\; \alpha(q_2), \alpha(q_{2,1}), [\ldots\, \alpha(q_{2,1,k})\, \ldots], \alpha(q_{2,2}), [\ldots\,\alpha(q_{2,2,k})\, \ldots], \; \dots \; , \alpha(q_{2,n-1}), [\ldots\,\alpha(q_{2,n-1,k}) \,\ldots],\\
		\vdots~~~~~~~~~~~~~~~~~~~~~~~~~
		\\
		\; \alpha(q_{n-1}), \alpha(q_{n-1,1}), [\ldots\, \alpha(q_{n-1,1,k})\, \ldots], \alpha(q_{n-1,2}), [\ldots\,\alpha(q_{n-1,2,k})\, \ldots], \; \dots \; , \alpha(q_{n-1,n-1}), [\ldots\,\alpha(q_{n-1,n-1,k}) \,\ldots]].\\[.2cm]
\mbox{\normalsize (*)\ \ Form of states (superpositions) of $M^{(3)}$.~~~~~~~~~~~~~~~~~~~~~~~~~~~~~~~~~~~~~~~~~~~~~}
	\end{align*}
	}
	\vspace{-.6cm}
	
\noindent As usual, we let $\pi_0=\e_1$. The evolution matrices of $M^{(3)}$  are $A^{(3)}_\r=I$, while we have $A_\sigma^{(3)} = B^{(3)}\cdot \tilde{B}^{(2)}\cdot \tilde{A}^{(1)}$, where each matrix in the product acts on  levels of the basis states as follows: 
	$\tilde{A}^{(1)}$ affects the states in $\set{q_0}\cup Q_{acc}^{(1)}$,\ \ $\tilde{B}^{(2)}$ the states in $Q^{(1)}_{acc}\cup Q^{(2)}_{acc}$, and $B^{(3)}$ the states in $Q^{(2)}_{acc}\cup Q^{(3)}_{acc}$.
	From now on, it will be useful to describe the dynamic of $M^{(3)}$ by displaying the sequence of the stochastic vectors
	obtained by squaring the amplitudes in the superpositions of the form in (*). In such vectors, the value $|\alpha(q)|^2$ of the component associated with $q$ represents
	 the probability for $M^{(3)}$ of being in the basis state~$q$. This stochastic dynamic description turns out to be appropriate as $M^{(3)}$ uses the canonical observable after each quantum Fourier transform operation.  
Upon reading a symbol $\sigma$, the \lqfa\ $M^{(3)}$ executes~$A^{(3)}_\sigma$ followed by measuring~$\tilde\O^{(1)}_\sigma$: formally, we write $A^{(3)}_\sigma\downarrow \tilde\O^{(1)}_\sigma$. 
	This operation distributes the probability differently in the three group of basis states $Q^{(1)}$, $ Q_{acc}^{(2)}$ and $Q_{acc}^{(3)}$.
	In particular, the probability values turn out to be identical within each group of basis states, for each step of computation (except for the initial superposition $\pi_0$). 
	Therefore, the form of the stochastic vector at each step of computation is
	\begin{flalign*}
		[x,\, x,\, y,\, [\cdots\, z\, \cdots],\, y,\, [\cdots\, z\, \cdots], \; \dots \; ,\, y,\, [\cdots\, z\, \cdots],\\
		\; x,\, y,\, [\cdots\, z\, \cdots],\, y,\, [\cdots\, z\, \cdots], \; \dots \; , y,\, [\cdots\, z\, \cdots],\\
	  \vdots~~~~~~~~~~~~~~~~~~~~~~~~~
	  \\
		\; x,\, y,\, [\cdots\, z\, \cdots],\, y,\, [\cdots\, z\, \cdots], \; \dots \; ,\, y,\, [\cdots\, z\, \cdots]],
	\end{flalign*}
	where $x$ is the probability value for the states in $Q^{(1)}$,\ \ $y$ for the states in $ Q_{acc}^{(2)} $, and $z$ for the (accepting) states in $Q_{acc}^{(3)}$. Thus, the current accepting probability is $(n-1)^3\cdot z$.

Now, let $x(k)$,\  $y(k)$, and $z(k)$ be the above basis states probabilities after processing the $k$th input symbol. 
We are going to establish the dependence of such values from $x(k-1)$,\ $y(k-1)$, and $z(k-1)$ in order
to single out a closed formula for the stochastic event $p_{M^{(3)}}$. To this aim, for reader's ease of mind,
 a graphical representation is given in Figure~\ref{fig:M3_trees}, of how one step of the evolution-plus-observation $A^{(3)}_\sigma\downarrow \tilde\O^{(1)}_\sigma$ affects the probability values in each different group of basis states.  
\begin{figure}[hbt]
\begin{center}
\includegraphics[scale=.36]{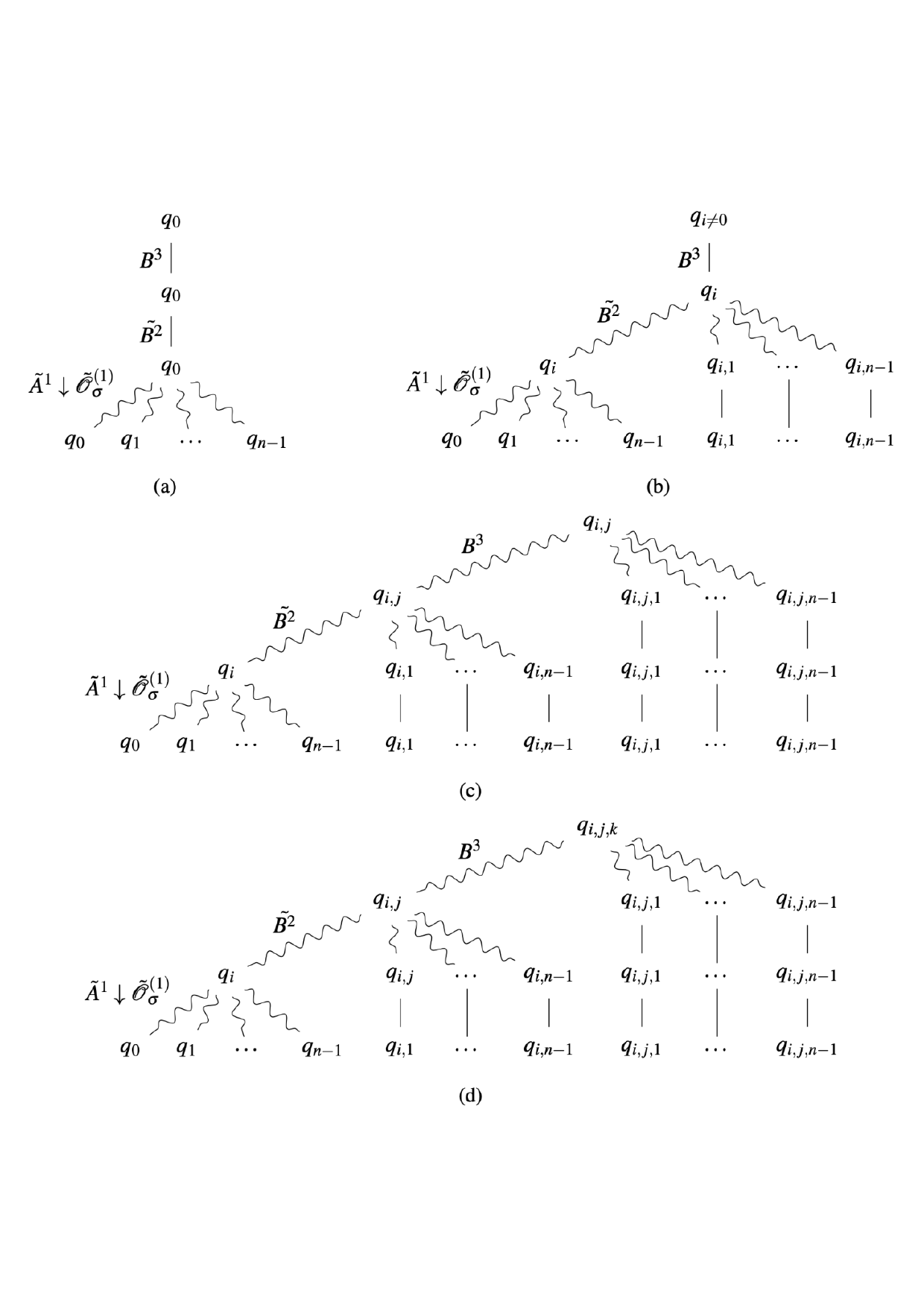}
\vspace{.1cm}
\caption{Stochastic representation of a computation step of $M^{(3)}$ on the symbol $\sigma$ for basis states of different groups.  The notation $\tilde{A}^1\downarrow \tilde\O^{(1)}_\sigma$ means that $\tilde{A}^1$ is applied and then the observable
$\tilde\O^{(1)}_\sigma$ is~measured. Wave (straight) edges indicate basis state transitions occurring with probability $\frac{1}{n}$ (with certainty). For instance, the tree in (b) says that, starting from $q_{i\ne0}$ for a fixed $i$ and after one step of computation,  we will observe $M^{(3)}$ in $q_0$ with probability $\frac{1}{n^2}$. Note that there exist $n-1$ trees of the form (b) leading to~$q_0$.}\label{fig:M3_trees}
\end{center}
\end{figure}

	\newpage
	
	\noindent Let us focus, e.g., on~$x(k)$. 
The probability $x(k)$ depends on $x(k-1)$,\ \ $y(k-1)$, and $z(k-1)$ as follows:
	\begin{itemize}
	\item Figure~\ref{fig:M3_trees}(a) shows that the basis state $q_0$ contributes with $\frac{1}{n}\cdot x(k-1)$.
	\item Figure~\ref{fig:M3_trees}(b) shows the contribution of each basis states in $Q^{(1)}_{acc}$,  which is $\frac{1}{n^2}\cdot x(k-1)$; given that $|Q^{(1)}_{acc}|=(n-1)$, the total contribution is $\frac{(n-1)}{n^2}\cdot x(k-1)$.
	\item Figure~\ref{fig:M3_trees}(c) shows that the total contribution given by $y(k-1)$ elements (i.e., by the $(n-1)^2$  basis states in $Q^{(2)}_{acc}$) is $\frac{(n-1)^2}{n^3}\cdot y(k-1)$.
	\item Figure~\ref{fig:M3_trees}(d) shows that the total contribution given by $z(k-1)$ elements (i.e., by the $(n-1)^3$  basis states in $Q^{(3)}_{acc}$) is $\frac{(n-1)^3}{n^3}\cdot z(k-1)$.
	\end{itemize}
By analogous reasonings, we can obtain recurrences for $y(k)$ and $z(k)$, globally yielding the system 
	\begin{align} \label{eq:recursion_sys}
		\begin{cases}
			x(k) = \frac{1}{n}\cdot x(k - 1) + \frac{(n-1)}{n^2}\cdot x(k - 1) + \frac{(n-1)^2}{n^3}\cdot y(k - 1) +\frac{(n-1)^3}{n^3}\cdot z(k - 1)\\
			y(k)=  \frac{1}{n}\cdot x(k - 1) + \frac{(n-1)}{n^2}\cdot y(k - 1) +\frac{(n-1)^2}{n^2}\cdot z(k - 1)\\
			z(k) = \frac{1}{n}\cdot y(k - 1) + \frac{(n-1)}{n}\cdot z(k - 1).
		\end{cases}
	\end{align} 
	The base for this system of recurrences is the probability distribution after reading the first symbol $\sigma$,~i.e.:
	\begin{align} \label{eq:recursion_base}
	\begin{cases}
		x(1) = \frac{1}{n}\\
		y(1)= z(1) = 0.
	\end{cases}
	\end{align} 
	From the system (\ref{eq:recursion_sys}), the reader may verify that at each computation step the probability ``shifts'' towards the next deeper level of the basis states until reaching the basis states in $Q^{(3)}_{acc}$. In fact,
	after the first~step (yielding probabilities in (\ref{eq:recursion_base})), only the $x$-components have non null values.
	After the second step, only the $x$- and $y$-components have values different from $0$, while the value of the $z$-components is still~$0$. This shows that $M^{(3)}$ rejects with certainty the strings in $\sigma^{\le2}$. After the third step, all the components have non null values; in particular, $z(3)=\frac{1}{n^3}$, so that the accepting probability of the string $\sigma^3$~attains $|Q_{acc}^{(3)}|\cdot z(3)=(\frac{n-1}{n})^3$. 
	By solving the system (\ref{eq:recursion_sys}),
	we get a closed formula for $z(k)$, with $k\ge2$, as
$$		z(k) 
		= \frac{1}{n(n-1)^2}\cdot \left( 1 -
		\frac{(\frac{2n-2}{n^2})^{k-2}\cdot (n-1)^2 + 1 }{(n-1)^2+1}
		\right).
$$
	This allows us to evaluate the accepting probability of $M^{(3)}$ for any string in $\sigma^{*}$ as
	\begin{equation}\label{eq: m3_acc_prob}
	p_{M^{(3)}}(\sigma^k) =  |Q^{(3)}_{acc}|\cdot z(k) = \begin{cases}
		0 & \mbox{if $k\le 2$}\\
		\frac{n-1}{n}\cdot \left( 1 -
		\frac{(\frac{2n-2}{n^2})^{k-2}\cdot (n-1)^2 + 1 }{(n-1)^2+1} 
		\right)& \mbox{if $k\ge 3$}.
		\end{cases}
	\end{equation} 
Equation (\ref{eq: m3_acc_prob}) shows that $M^{(3)}$ recognizes $\sigma^{\ge3}$ with isolated cut point.
Clearly, the stochastic event induced by $M^{(3)}$ depends on the number $n$ of the basis states of $M^{(1)}$,  
the initial automaton of the inductive construction. Figure \ref{fig:graph_m3} displays $p_{M^{(3)}}$ for some values of $n$. As expected, the higher $n$ grows, the better the isolation around the cut point becomes. 
\newpage

\begin{center}
\begin{figure}[hbt]
~~~~~~\includegraphics[scale=.4]{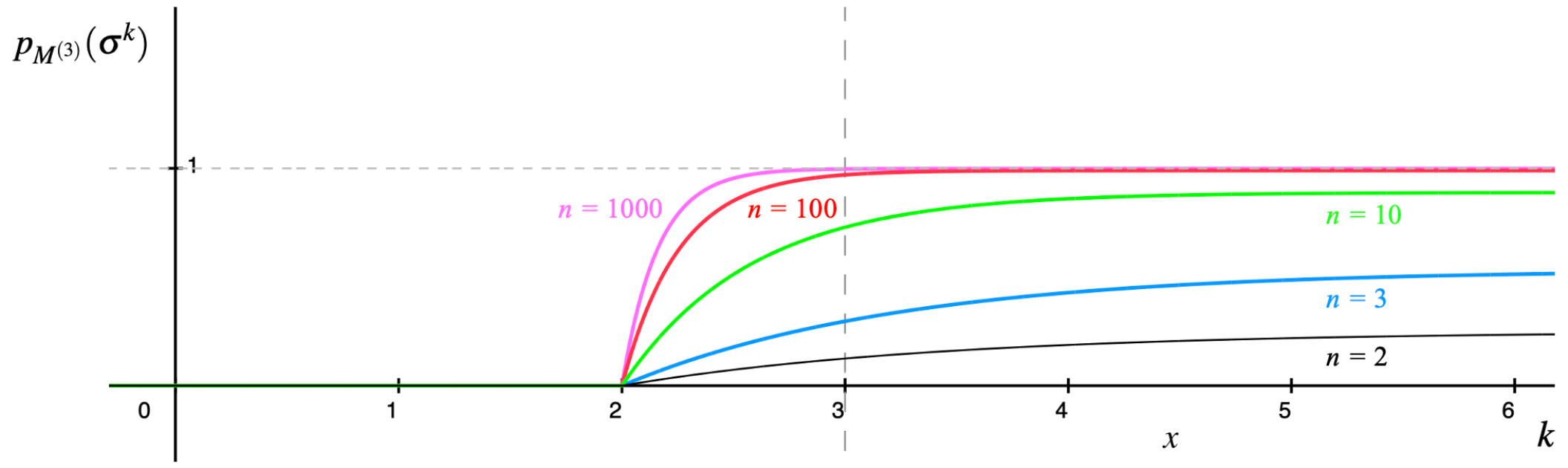}
		\caption{The (``continuous version'' of the) stochastic events induced by $M^{(3)}$ according to
		Equation~\ref{eq: m3_acc_prob}, for different values of the number $n$ of  basis states of $M^{(1)}$, the base module inductively leading~to~$M^{(3)}$.}
		\label{fig:graph_m3}\end{figure}
\end{center}
~\\
\vspace{-1.8cm}

\noindent 	Now, we consider the general \lqfa\ $M^{(\ell)}$, and derive the system of recurrences for its stochastic dynamic.
	The set of basis states of $M^{(\ell)}$ is now partitioned into $\ell$ groups. For $1\le h\le \ell$, we denote by 
	$x_h(k)$ the probability for $M^{(\ell)}$ of
	being in a basis state of the $h$th group, after processing $k$ input symbols.
	  The system of recurrences for $M^{(\ell)}$ generalizes the system ({\ref{eq:recursion_sys}}) as follows:
{
	\begin{equation}\label{eq:rec_general}
	\begin{cases}
		x_1(k) = \frac{1}{n}\cdot x_1(k - 1) + \sum_{j=1}^{\ell-1} \frac{(n-1)^j}{n^{j+1}}\cdot x_j (k - 1)  + \frac{(n-1)^\ell}{n^\ell}\cdot x_\ell(k - 1) \\	
		x_2(k)= \left(\sum_{j=0}^{\ell-2} \frac{(n-1)^j}{n^{j+1}} \cdot x_{j+1}(k - 1) \right)  + \frac{(n-1)^{\ell-1}}{n^{\ell-1}}\cdot x_\ell(k - 1) \\
		~~~~~~~~~~~~~~~~~~~\vdots\\
		x_h(k)=  \left(\sum_{j=0}^{\ell-h} \frac{(n-1)^j}{n^{j+1}}\cdot x_{j+h-1}(k - 1)\right) + \frac{(n-1)^{\ell-h+1}}{n^{\ell-h+1}}\cdot x_{\ell}(k - 1) \\
		~~~~~~~~~~~~~~~~~~~\vdots\\
		x_\ell(k) = \frac{1}{n}\cdot x_{\ell-1}(k - 1)  + \frac{(n-1)}{n}\cdot x_{\ell}(k - 1),
	\end{cases}
	\end{equation}
	}
	
	\noindent with initial values $x_1(1)=\frac{1}{n}$, and $x_h(1)=0$ for every $2\le h\le\ell$.
	We show the validity of this system of recurrences by induction, having, e.g., the system  (\ref{eq:recursion_sys}) for the automaton $M^{(3)}$ as base case. 
	By inductive hypothesis, we assume the system of recurrences for $M^{(\ell-1)}$, and we build the system~(\ref{eq:rec_general}) for~$M^{(\ell)}$. 
	We consider the set of trees representing one step of the computation of our automata, starting from basis states of different groups. E.g., Figure \ref{fig:M3_trees} displays the four different types of trees for $M^{(3)}$, one per each group of basis states, plus one for the evolution from the state $q_0$.  So, for $M^{(\ell)}$ we are going to provide $\ell$ of such trees, plus the one for $q_0$. Let us explain how to obtain them from the trees of $M^{(\ell-1)}$. Let $T^{(\ell-1)}_j$ be a tree representing one step of the evolution of $M^{(\ell-1)}$ on a basis state of group $1\le j\le \ell-1$, namely, a basis state from $Q^{(j)}_{acc}$. Moreover, let $T^{(\ell-1)}_0$ be the tree for $q_0$. The evolution for $M^{(\ell)}$  is\ \ $A_\sigma^{(\ell)} = B^{(\ell)}\cdot \tilde{A}^{(\ell-1)}$.
	\begin{figure}
		\centering
		\begin{subfigure}[t]{0.45\textwidth}
			\centering
			\resizebox{\linewidth}{!}{
			\begin{tikzpicture} 
				[every tree node/.style={ellipse},
				level distance=1.25cm,sibling distance=0.1cm,
				edge from parent path={(\tikzparentnode) -- (\tikzchildnode)},
				decoration=snake]
				]
				\Tree
				[.$q_{i_1,\dots,i_{\ell-1}}$
				\edge[decorate] node[auto=right] {$B^\ell$};
				[\edge; \node[draw, blue, regular polygon, regular polygon sides = 3, inner sep = 1pt, minimum size=100pt, yshift=-15pt]{$T_{\ell-1}^{\ell-1}$}; ]
				\edge [decorate];
				[.$q_{i_1,\dots,i_{\ell-1}, 1}$
				[.$\dots$ 
				[.$q_{i_1,\dots,i_{\ell-1}, 1}$ ] 
				]
				]
				\edge [decorate];
				[.$\dots$
				[.$\dots$ 
				[.$\dots$ ] 
				]
				]
				\edge [decorate];
				[.$q_{i_1,\dots,i_{\ell-1}, n-1}$
				[.$\dots$ 
				[.$q_{i_1,\dots,i_{\ell-1}, n-1}$ ] 
				]
				]
				]
			\end{tikzpicture}
		}
			\caption{}
			\label{fig:mltree_l}
		\end{subfigure}
		\hfill
		\begin{subfigure}[t]{0.45\textwidth}
			\centering
			\resizebox{\linewidth}{!}{
			\begin{tikzpicture}
				[every tree node/.style={ellipse},
				level distance=1.25cm,sibling distance=0.1cm,
				edge from parent path={(\tikzparentnode) -- (\tikzchildnode)},
				decoration=snake]
				\Tree
				[.$q_{i_1,\dots,i_{\ell}}$
				\edge [decorate] node[auto=right] {$B^\ell$};
				[\edge; \node[draw, blue, regular polygon, regular polygon sides = 3, inner sep = 1pt, minimum size=100pt, yshift=-15pt]{$T_{\ell-1}^{\ell-1}$}; ]
				\edge [decorate];
				[.$q_{i_1,\dots,i_{\ell-1}, 1}$
				[.$\dots$ 
				[.$q_{i_1,\dots,i_{\ell-1}, 1}$ ] 
				]
				]
				\edge [decorate];
				[.$\dots$
				[.$\dots$ 
				[.$\dots$ ] 
				]
				]
				\edge [decorate];
				[.$q_{i_1,\dots,i_{\ell-1}, n-1}$
				[.$\dots$ 
				[.$q_{i_1,\dots,i_{\ell-1}, n-1}$ ] 
				]
				]
				]
			\end{tikzpicture}
			}
			\caption{}
			\label{fig:mltree_l+1}
		\end{subfigure}
		\caption{The form of the tree $T^\ell_{\ell-1}$ in (a), and of the tree $T^\ell_{\ell}$ in (b) for the automaton
		 $M^{(\ell)}$. Within both these two trees, the tree $T_{\ell-1}^{\ell-1}$ turns out to be a sub-tree.}
		\label{fig:mltree_induction}
	\end{figure}
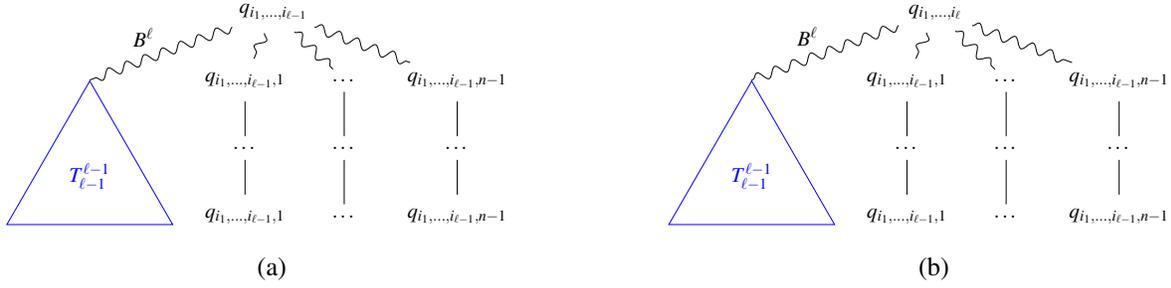
Thus, the behavior of $M^{(\ell)}$ is described by $\ell +1$ trees with the following structure:
	\begin{itemize}
		\item The trees $T^{(\ell)}_j$ for $0\le j< \ell-1$ are basically the trees $T^{(\ell-1)}_j$ with a preliminary 
		step due to the action of $B^{(\ell)}$. Since in these trees the root is labeled by a basis state of level $j<\ell-1$, such a preliminary step coincides with the identity evolution.
		\item Even the trees $T^{(\ell)}_{\ell -1}$ and $T^{(\ell)}_{\ell}$ have the action of $B^{(\ell)}$ 
		as a preliminary step. However, in these~cases, $B^{(\ell)}$ acts as $F_n$ on the basis states of groups $\ell-1$ in the tree $T^{(\ell)}_{\ell -1}$, and $\ell$ in the tree $T^{(\ell)}_{\ell}$.  The structure of these two trees, both containing the tree $T_{\ell-1}^{\ell-1}$ as a sub-tree, is presented in Figure \ref{fig:mltree_induction}.
	\end{itemize}
	\newpage
	~\\
	\vspace{-1.1cm}
	
\noindent 	It is now possible to properly justify the system (\ref{eq:rec_general}) by using the induction step.
	Starting from the system of recurrences for $M^{(\ell-1)}$, we show how it modifies towards the system for 
$M^{(\ell)}$. Clearly, a new recurrence for $x_\ell(k)$ (i.e., the probabilities for basis states of group $\ell$, the accepting states for $M^{(\ell)}$) is added at the end of the system. This component receives contributions only from the trees~$T^{(\ell)}_{\ell -1}$~and~$T^{(\ell)}_{\ell}$ weighted, respectively, by $x_{\ell-1}(k-1)$ and $x_{\ell}(k-1)$. Precisely, from the former tree we get the contribution $\frac{1}{n}\cdot x_{\ell-1}(k-1)$, from the latter ($n-1$ different trees) the contribution is $\frac{n-1}{n}\cdot x_{\ell}(k-1)$. For $x_h(k)$, with $1\le h\le\ell-1$, we note that
the only modified contribution is the one carried by $x_{\ell-1}(k-1)$; moreover a new contribution from 
$x_{\ell}(k-1)$ is added. Even in this case, the trees $T^{(\ell)}_{\ell -1}$ and $T^{(\ell)}_{\ell}$ account for
these modifications: the new coefficient of $x_{\ell-1}(k-1)$ is the old one for $x_{\ell-1}(k-1)$ (i.e., the one associated with $x_{\ell-1}(k-1)$ in the system for $M^{(\ell-1)}$)  multiplied by~$\frac{1}{n}$, while the coefficient of the new contribution $x_{\ell}(k-1)$ is the old one for $x_{\ell-1}(k-1)$ multiplied by
$\frac{(n-1)}{n}$.
\smallskip

By simply applying repeated substitutions in the system (\ref{eq:rec_general}), one may verify that,
 for $1\le k\leq \ell$, the value $x_k (k)$ always equals $\frac{1}{n^k}$, while we have $x_{k+1}(k) = \dots = x_\ell(k)=0$.
Nevertheless, this implies that the acceptance probability of $M^{(\ell)}$ for the string $\sigma^k$ is zero for $k<\ell$, while is
 $|Q_{acc}^{(\ell)}|\cdot x_\ell(\ell)=(\frac{n-1}{n})^{\ell}$ for $k=\ell$. We are now going to prove that for the strings in the language $\sigma^{\ge\ell}$
the acceptance probability never goes below $(\frac{n-1}{n})^{\ell}$. To this aim, it suffices to show
	\begin{thm}
		On the input string $\sigma^{\ell + s}$, with $s\ge 0$, the probability for $M^{(\ell)}$ of being in one of the accepting basis states in $Q_{acc}^{(\ell)}$ while processing the suffix $\sigma^s$ is greater than or equal to $\frac{1}{n^\ell}$.
	\end{thm}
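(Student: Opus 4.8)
The plan is to recast the statement in terms of the single-state probabilities $x_h(k)$ ruled by the recurrence system (\ref{eq:rec_general}). Since after reading $k$ symbols all $(n-1)^\ell$ basis states of the top group share the common value $x_\ell(k)$, the probability addressed by the statement is exactly $x_\ell(\ell+s)$ on input $\sigma^{\ell+s}$. Thus the theorem amounts to proving $x_\ell(\ell+s)\ge\frac{1}{n^\ell}$ for every $s\ge 0$. As already observed just before the statement, $x_\ell(\ell)=\frac{1}{n^\ell}$; hence it suffices to show that $k\mapsto x_\ell(k)$ is non-decreasing, which in fact yields the bound at \emph{every} intermediate step while the suffix is being read, matching either reading of the statement.

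Monotonicity is immediate once one has the right comparison between groups: the last line of (\ref{eq:rec_general}) gives
\[
x_\ell(k)-x_\ell(k-1)=\frac{1}{n}\bigl(x_{\ell-1}(k-1)-x_\ell(k-1)\bigr),
\]
so the whole argument reduces to the following ordering lemma, which I would prove by induction on $k$: for every $k\ge 1$ one has $x_1(k)\ge x_2(k)\ge\cdots\ge x_\ell(k)\ge 0$. The base case $k=1$ is clear from the initial values $x_1(1)=\frac{1}{n}$ and $x_2(1)=\cdots=x_\ell(1)=0$.

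For the inductive step the key structural fact is that, in each equation of (\ref{eq:rec_general}), the coefficients are non-negative and sum to $1$, a consequence of the geometric identity $\sum_{j=0}^{m}\frac{(n-1)^j}{n^{j+1}}+\bigl(\frac{n-1}{n}\bigr)^{m+1}=1$, so that every $x_h(k)$ is a genuine convex combination of the $x_j(k-1)$. Forming $x_h(k)-x_{h+1}(k)$ and reindexing the two overlapping combinations, all weights cancel except a single \emph{positive} weight on the lowest-index term (namely $x_{h-1}(k-1)$, or $x_1(k-1)$ when $h=1$) together with a block of \emph{negative} weights on the higher-index terms; the same identity shows that these balance exactly, the positive weight and the total negative weight both equalling $\frac{1}{n}$ for $2\le h\le\ell-1$, and both equalling $\frac{n-1}{n^2}$ for $h=1$. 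Hence $x_h(k)-x_{h+1}(k)$ equals a positive constant times $\bigl(x_{h-1}(k-1)-\overline{x}\bigr)$, where $\overline{x}$ is a convex combination of $x_h(k-1),\dots,x_\ell(k-1)$; by the inductive hypothesis $x_{h-1}(k-1)$ dominates each of these, so the difference is $\ge 0$. Non-negativity of $x_\ell(k)$ comes for free, being a convex combination of non-negative quantities.

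Granting the lemma, $x_{\ell-1}(k-1)\ge x_\ell(k-1)$ makes the displayed difference non-negative for all $k$, so $x_\ell$ is non-decreasing and $x_\ell(\ell+s)\ge x_\ell(\ell)=\frac{1}{n^\ell}$, as required. I expect the main obstacle to be the bookkeeping in the inductive step: correctly reindexing the two telescoping convex combinations and checking the coefficient-balancing identities. The most delicate point is the boundary equation $h=1$, whose extra $\frac{1}{n}\,x_1(k-1)$ contribution (arising from the state $q_0$) breaks the uniform coefficient pattern of the generic rows and must be handled separately.
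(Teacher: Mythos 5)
Your proposal is correct, but it follows a genuinely different route from the paper's own proof. The paper also works inside the system (\ref{eq:rec_general}) by induction on $k$, but its invariant is a uniform \emph{lower bound} on all groups at once: first it shows, for $1\le k\le\ell$, that $x_h(k)\ge\frac{1}{n^k}$ for every $1\le h\le k$ (hence $x_1(\ell),\dots,x_\ell(\ell)\ge\frac{1}{n^\ell}$), and then, for $k\ge\ell$, that $x_h(k)\ge\frac{1}{n^\ell}$ for every $h$; both inductive steps are closed with the same geometric identity $\sum_{j=0}^{m}\frac{(n-1)^j}{n^{j+1}}+\bigl(\frac{n-1}{n}\bigr)^{m+1}=1$ that you use. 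Your invariant is instead the \emph{ordering} $x_1(k)\ge x_2(k)\ge\cdots\ge x_\ell(k)\ge 0$, from which monotonicity of $x_\ell$ follows via $x_\ell(k)-x_\ell(k-1)=\frac{1}{n}\bigl(x_{\ell-1}(k-1)-x_\ell(k-1)\bigr)$, and the bound then comes from the value $x_\ell(\ell)=\frac{1}{n^\ell}$ recorded before the theorem. I checked your coefficient bookkeeping and it is sound: each row of (\ref{eq:rec_general}) is a convex combination, and the cancellation in $x_h(k)-x_{h+1}(k)$ leaves exactly the balanced weights you state ($\frac{1}{n}$ on $x_{h-1}(k-1)$ versus the higher-index block for $2\le h\le\ell-1$, and $\frac{n-1}{n^2}$ in the boundary case $h=1$), so the lemma's inductive step goes through. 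As to what each approach buys: the paper's argument is self-contained in that its first stage re-derives the needed bounds for $k\le\ell$, and it yields lower bounds on \emph{every} group's probability, whereas you inherit the (merely asserted, ``repeated substitutions'') fact $x_\ell(\ell)=\frac{1}{n^\ell}$ from the preceding discussion; on the other hand, your argument proves that the induced event is non-decreasing in $k$ for \emph{every} $\ell$, a property the paper establishes only for $M^{(3)}$ via its closed formula and then exploits in the size-reduction remark at the end of Section~\ref{sec:Mh_description}, and your monotonicity statement gives the claimed bound at every intermediate step of the suffix, matching the theorem's phrasing most directly.
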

	\begin{proof} We split the proof into two parts, both proved by induction. In the first part,
we focus on the input prefix $\sigma^\ell$. We show by induction on $1 \leq k \leq \ell$ that
$x_h(k) \geq \frac{1}{n^{k}}$ in the system (\ref{eq:rec_general}) holds true for every $1\le h\le k$.
This will enables us to obtain that $x_1(\ell),\ldots,x_\ell(\ell)\ge \frac{1}{n^\ell}$. For the base case $k=1$, 
we recall that $x_1(1)=\frac{1}{n}$. So, let us assume by inductive hypothesis that $x_h(k)\ge\frac{1}{n^k}$ 
for a given $k<\ell$ and every $1\le h\le k$, and prove the property for $k+1$. From the system (\ref{eq:rec_general}), we have
{\small \begin{align*}		
x_h(k+1) =  \frac{1}{n}\cdot x_{h-1}(k) +  \frac{n-1}{n^2}\cdot x_{h}(k)  + \frac{(n-1)^2}{n^3}\cdot x_{h+1}(k) +
		 \dots +\frac{(n-1)^{k-h+1}}{n^{k-h+2}}\cdot x_{k}(k) + \dots + \frac{(n-1)^{\ell-h+1}}{n^{\ell-h+1}}\cdot x_{\ell}(k).
\end{align*}}
		Since $x_{h}(k) \geq \frac{1}{n^k}$ for  $1\le h\leq k$, and $0$ otherwise, we can bound $x_h(k+1)$ from below as
		\begin{align*}
		x_h(k+1) & \geq  \frac{1}{n^k}\cdot\frac{1}{n}\cdot \left(  1 + \frac{n-1}{n} +  \frac{(n-1)^2}{n^2} + \dots + \frac{(n-1)^{k-h+1}}{n^{k-h+1}} \right)
		\geq \frac{1}{n^{k+1}}.
		\end{align*}
		Now, the second part of the proof comes, where we show, again by induction on $k$, that $x_h(k)\geq \frac{1}{n^\ell}$ for $k \geq \ell$ and $1\le h\le\ell$. By the first part of the proof, we have
		$x_1(\ell), x_2(\ell), \dots x_\ell(\ell) \geq \frac{1}{n^\ell}$, and so the base case holds true. 
		We prove $x_h(k+1) \geq \frac{1}{n^\ell}$ assuming such a property for $k$ by inductive hypothesis.
		From the system (\ref{eq:rec_general}), we~get
		{\small \begin{align*}
		x_h(k+1)  = \frac{1}{n}\cdot x_{h-1}(k) +  \frac{n-1}{n^2}\cdot x_{h}(k)  + \frac{(n-1)^2}{n^3}\cdot x_{h+1}(k) + \ldots +
		\frac{(n-1)^{\ell-h}}{n^{\ell-h+1}}\cdot x_{\ell-1}(k) + \frac{(n-1)^{\ell-h+1}}{n^{\ell-h+1}}\cdot x_{\ell}(k). 
		\end{align*}}
		Since we are assuming all $x_j(k)$'s to be greater than or equal to $\frac{1}{n^\ell}$,
		we can bound $x_h(k+1)$ from below~as
		\begin{align*}
		x_h(k+1) \geq  \frac{1}{n^\ell}\cdot  \left(  
		\frac{1}{n} + \frac{n-1}{n^2} +  \frac{(n-1)^2}{n^3} + \dots + \frac{(n-1)^{\ell-h}}{n^{\ell-h+1}} + \left( \frac{n-1}{n}\right)^{\ell-h+1} 
		\right)
		= \frac{1}{n^\ell},
		\end{align*}
		whence, the claimed result follows.
	\end{proof}

	We can conclude that $M^{(\ell)}$ induces the following stochastic event:
		\begin{equation}\label{eq: ml_acc_prob}
	p_{M^{(\ell)}}(\sigma^k) =  |Q^{(\ell)}_{acc}|\cdot x_\ell(k)\; \begin{cases}
		= 0 & \mbox{if $k< \ell$}\\
		\ge \left(\frac{n-1}{n}\right)^\ell & \mbox{if $k\ge \ell$}.
		\end{cases}
	\end{equation} 
This shows that the automaton $M^{(\ell)}$ recognizes $\sigma^{\ge\ell}$ with isolated cut point and $n^{O(\ell)}$ basis states.
As expected, for $n\rightarrow \infty$, the event in (\ref{eq: ml_acc_prob}) approximates a deterministic behavior. In fact, for growing values of $n$, we have $p_{M^{(\ell)}}(\sigma^k)\rightarrow 1$ for $k\ge\ell$, and $p_{M^{(\ell)}}(\sigma^k)=0$ for $k<\ell$.
\smallskip

To sum up, 
let us get back to our initial purpose, i.e., building an isolated cut point \lqfa\ for the language $\sigma^{\ge T}$.
Such a \lqfa\ is obtained by pushing $T$ steps ahead from $M^{(1)}$ the inductive construction to finally get the~\lqfa\ $M^{(T)}$.
As noted, $M^{(T)}$ features $n^{O(T)}$ basis states, $n$ being the number of basis states of $M^{(1)}$. From Equation (\ref{eq: ml_acc_prob}), we can fix a cut point \mbox{$\frac{1}{2}\cdot \left(\frac{n-1}{n}\right)^T$} isolated by $\frac{1}{2}\cdot \left(\frac{n-1}{n}\right)^T$. By increasing~$n$, we widen such an isolation, tending to a deterministic recognition~of~the language $\sigma^{\ge T}$. 
 
Focusing on the size of $M^{(T)}$, we observe that its number of basis states exponentially depends on~$T$.
As a matter of fact, we can avoid such an exponential blow up by noticing that even the \lqfa~$M^{(3)}$
can actually accept with isolated cut point the language $\sigma^{\ge T}$, for $T\ge4$. This is due to the fact that the stochastic event induced by $M^{(3)}$ is an increasing function, as one may readily infer from Equation (\ref{eq: m3_acc_prob}) and Figure~\ref{fig:graph_m3}. By~this property, we can fix the isolated cut point between $p_{M^{(3)}}(\sigma^{T-1})$ and $p_{M^{(3)}}(\sigma^{T})$, thus recognizing~$\sigma^{\ge T}$ with $n^{O(1)}$ basis states, not depending on $T$ any more. Nevertheless, such a dramatic size reduction comes at a price. In fact, the isolation around the
cut point shrinks from $\frac{1}{2}\cdot \left(\frac{n-1}{n}\right)^T$ to $\frac{p_{M^{(3)}}(\sigma^{T})-p_{M^{(3)}}(\sigma^{T-1})}{2}=\frac{1}{2}\cdot (\frac{2}{n})^{T-3}\cdot (\frac{n-1}{n})^{T-1}\cdot(\frac{n+1}{n})$. 
This isolation vanishes as $n$ grows, thus suggesting to consider small values of $n$.
E.g., for $n=2$ we obtain an isolation of $\frac{3}{2}\cdot (\frac{1}{2})^{T}$; for $n=3$ we get $\frac{27}{8}\cdot (\frac{4}{9})^{T}$.

	\section{Isolated Cut Point \lqfas\ for Unary Regular Languages}\label{sec:unary_construction}
	
	Here, we are going to use the \lqfas\ designed in the previous section as
	modules in a more general construction yielding isolated cut point \lqfas\ for unary regular languages.
	This investigation is inspired by \cite{Bianchi10} where the same problem is tackled for \mms. Our result constructively shows that isolated cut point \mms\ and \lqfas\ are equivalent on unary inputs, in sharp contrast to the case for general alphabets where \mms\ outperform \lqfas\ (see Section~\ref{lqfa}).
	 \smallskip

	We start by observing that, according to Theorem \ref{chrobak}, any unary regular language $L \subseteq \sigma^*$ can viewed as the disjoint union of two unary languages, namely, the finite language $L_T=L \cap \sigma^{\le T}$ plus the ultimately periodic language  $L_P = L \cap \sigma^{\ge T+1}$. So, we are going to design two \lqfa\ modules recognizing these two languages with isolated cut point,  and then suitably assemble such modules into a final isolated cut point  \lqfa\ $A_L$ for 	the unary regular language $L$.
\medbreak

\noindent {\bf The finite language $L_T$:}	We define the ``$(T+1)$-periodic continuation''  $L_{T^{\bigcirc}}$ of $L_T$, namely, the language obtained from  $L_T$ by adding all the strings of the form $\sigma^{i+h\cdot (T+1)}$, with $h\ge0$, for $\sigma^i\in L_T$. Formally, $L_{T^{\bigcirc}} = \sett{\sigma^{i+h\cdot (T+1)}}{h\geq 0\mbox{ and }\sigma^i \in L_T}$. Clearly, $L_{T^\bigcirc} $ is a periodic language of period $(T+1)$, and we have that $L_T= L_{T^\bigcirc} \cap \sigma^{\le T}$.
Therefore, in order to recognize $L_{T}$, we start by defining the isolated cut point \lqfa~$A_{T^\bigcirc}$ for $L_{T^\bigcirc}$. We let
$
A_{T^\bigcirc}= (Q, \{\sigma\}, \pi_0, \{U(\sigma), U(\r) \}, \{\O_{\sigma}, \O_{\r}\}, Q_{acc})
$, where:
$Q=\{q_0,\ldots,q_T\}$ is the set of basis states, 	
$Q_{acc}=\sett{q_{i}}{0\le i\le T\mbox{ and }\sigma^i\in L_{T}}$ is the set of accepting basis  states,
$\pi_0 = \e_1$ is the initial superposition,
$U(\sigma) = S$, where $S\in\set{0,1}^{(T+1)\times(T+1)}$ is 
	the matrix representing the cyclic permutation: $S$ has 1 at the $(i, i+1)$th entries for $1\le i\le T$ and at the $(T+1,1)$th entry, all the other entries are 0, $U(\r) = I^{(T+1)} $,
	$\O_{\sigma}$ is the observable having the identity as sole projector,
$\O_\r$ is the usual final observable projecting onto the subspace spanned by $Q_{acc}$.
Given the observable $\O_\sigma$, we have that~$A_{T^\bigcirc}$ is basically a \mo\ whose induced event writes as
$
p_{A_{T^\bigcirc}}(\sigma^k)= \| \pi_0 \cdot U(\sigma)^k \cdot U(\r)\cdot P_{acc}(\r) \|^2.
$ 
After processing the input $\sigma^k\r$, the state $\xi(k)$ of $A_{T^\bigcirc}$ is
\begin{equation}\label{finite_finalsup}
	\xi(k)=\pi_0\cdot U(\sigma)^k\cdot U(\r)=\e_1\cdot U(\sigma)^k\cdot U(\r)= \e_{(k\mod (T+1))+1}.
\end{equation}
Let us now discuss measuring by the final observable, i.e., the action of the projector $P_{acc}(\r)$ on the final superposition $\xi(k)$.
By (\ref{finite_finalsup}), $\xi(k)$ is $\e_{(k\mod (T+1))+1}$, representing the basis state $q_{k\mod (T+1)}$. 
By definition of $Q_{acc}$ we have that $q_{k\mod (T+1)}$ is an accepting state if and only if $\sigma^{k\mod (T+1)} \in L_{T}$ if and only if $\sigma^{k} \in L_{T^{\bigcirc}}$.
Therefore, we can rewrite the stochastic event induced by $A^{T^\bigcirc}$ as $p_{A_{T^\bigcirc}}(\sigma^k) =  \| \xi(k) \cdot P_{acc}(\r) \|^2 = 1$ if  $\sigma^k\in L_{T^{\bigcirc}}$, and $0$ otherwise.
Whence, the \lqfa\ $A_{T^{\bigcirc}}$ recognizes $L_{T^{\bigcirc}}$ by a deterministic event. Now, we need $A_{T^{\bigcirc}}$ to work simultaneously with a module which checks whether or not the input string has length not exceeding $T$, so that the resulting accepted language is $L_{T^\bigcirc} \cap \sigma^{\le T}= L_T$. Such a module can be obtained by complementing
 the \lqfa\ $M^{(T+1)}$ for $\sigma^{\ge T+1}$ presented in Section~\ref{sec:Mh_description} (basically, by taking $Q\setminus Q_{acc}$ as the set of accepting basis states). The resulting  \lqfa\ $\overline{M}^{(T+1)}$ induces the complement of the event in Equation (\ref{eq: ml_acc_prob}) with $\ell=T+1$:
	\begin{equation*}
	p_{\overline{M}^{(T+1)}}(\sigma^k)=1- p_{{M}^{(T+1)}}(\sigma^k) \begin{cases}
		=1 & \mbox{if $k\le T$}\\
		\le 1-\left(\frac{n-1}{n}\right)^{(T+1)} & \mbox{if $k\ge T+1$},
		\end{cases}
	\end{equation*} 
thus recognizing the language $\sigma^{\le T}$ with isolated cut point and $n^{O(T)}$ basis states. Finally, we build the \lqfa\
$A_{T^{\bigcirc}}\otimes \overline{M}^{(T+1)}$ (basically by taking the direct product component wise of the two \lqfas\ $A_{T^{\bigcirc}}$ and~$\overline{M}^{(T+1)}$)
inducing the product event
	\begin{equation*}
	p_{A_{T^{\bigcirc}}\otimes\overline{M}^{(T+1)}}(\sigma^k) =
	p_{A_{T^{\bigcirc}}}\cdot p_{\overline{M}^{(T+1)}}(\sigma^k)  \begin{cases}
		=1 & \mbox{if $\sigma^k\in L_T$}\\
		\le 1-\left(\frac{n-1}{n}\right)^{(T+1)} & \mbox{otherwise},
		\end{cases}
	\end{equation*} 
defining $L_T$ with $(T+1)\cdot n^{O(T)}$ basis states, and cut point $1-\frac{1}{2}\cdot \left(\frac{n-1}{n}\right)^{(T+1)}$ isolated by $\frac{1}{2}\cdot \left(\frac{n-1}{n}\right)^{(T+1)}$. Notice that, for large values of $n$, the \lqfa\ $A_{T^{\bigcirc}}\otimes \overline{M}^{(T+1)}$ approximates a deterministic recognition of~$L_T$.
\medbreak

\noindent {\bf The ultimately periodic language $L_P$:} 	
	It suites our goal to rewrite $L_P$ as $L_P = L_{P^\bigcirc} \cap \sigma^{\ge T+1}$, where we let
	$L_{P^\bigcirc}=\sett{\sigma^{(T+1+i)\mod P\; + h\cdot P}}{0\le i<P,\ h\ge0,\mbox{ and }\sigma^{T+1+i}\in L_P}$. 
	Clearly, $L_{P^\bigcirc}$ is a periodic language of period~$P$. So, for recognizing $L_P$, we first focus on building the isolated cut point \lqfa\ $A_{P^\bigcirc}$ for $L_{P^\bigcirc}$.
We let
	$A_{P^\bigcirc}= (Q,\{\sigma\}, \pi_0, \{U(\sigma), U(\r) \}, \{\O_\sigma, \O_\r\}, Q_{acc})$, where:
$Q=\{q_0,\ldots,q_{P-1}\}$ is the set of basis states,		
$Q_{acc}=\sett{q_{i}}{0\le i<P\mbox{ and }\sigma^i\in L_{P^{\bigcirc}}}$ is the set of accepting basis  states,	
$\pi_0 = \e_1$ is the initial superposition,	
$U(\sigma) = S$, where $S\in\set{0,1}^{P\times P}$ is the cyclic permutation matrix, $U(\r)=I^{(P)}$,		
$\O_{\sigma}$ is the observable having the identity as sole projector,	
$\O_\r$ is the usual final observable projecting onto the subspace spanned by $Q_{acc}$.
	Given the observable $\O_\sigma$, we have that $A_{P^\bigcirc}$ is basically a \mo\ whose induced event writes as
	$
	p_{A^{P^\bigcirc}}(\sigma^k)= \| \pi_0 \cdot U(\sigma)^k \cdot U(\r)\cdot P_{acc}(\r) \|^2.
	$ 
	After processing the input $\sigma^k\r$, the state $\xi(k)$ of $A_{P^\bigcirc}$ is
	\begin{equation}\label{finalsupP}	
		\xi(k)=\pi_0\cdot U(\sigma)^k\cdot U(\r)=\e_1\cdot U(\sigma)^k\cdot U(\r)=	\e_{(k\mod P)+1}.
	\end{equation}
Let us now measure the final observable on the final superposition $\xi(k)$. 
By (\ref{finalsupP}), $\xi(k)$ is $\e_{(k\mod P)+1}$, representing the basis state $q_{k\mod P}$. By definition of 
$Q_{acc}$, we have that $q_{k\mod P}$ is an accepting state if and only if $\sigma^k\in L_{P^\bigcirc}$.
Therefore,
the stochastic event induced by $A_{P^\bigcirc}$ is $p_{P^\bigcirc}(\sigma^k) =  \left\| \xi(k) \cdot P_{acc}(\r) \right\|^2  = 1$, 
if  $\sigma^k\in L_{P^\bigcirc}$, and $0$ otherwise.
whence, the \lqfa\ $A_{P^\bigcirc}$ recognizes $L_{P^\bigcirc}$ by a deterministic event. Now, we need $A_{P^{\bigcirc}}$ to work simultaneously with a module which checks whether or not the input string has length exceeding $T$, so that the resulting accepted language is $L_{P^\bigcirc} \cap \sigma^{\ge T+1}= L_P$. Such a module is the \lqfa\ $M^{(T+1)}$ for $\sigma^{\ge T+1}$ presented in Section \ref{sec:Mh_description}, and inducing the event
	\begin{equation*}
	p_{{M}^{(T+1)}}(\sigma^k) \begin{cases}
		\ge (\frac{n-1}{n})^{T+1} & \mbox{if $k\ge T+1$}\\
		= 0 & \mbox{$k\le T$},
		\end{cases}
	\end{equation*} 
thus recognizing the language $\sigma^{\ge T+1}$ with isolated cut point and $n^{O(T)}$ basis states. Finally, we build the \lqfa\
$A_{P^{\bigcirc}}\otimes {M}^{(T+1)}$,
inducing the product event
	\begin{equation*}
	p_{A_{P^{\bigcirc}}\otimes {M}^{(T+1)}}(\sigma^k) =
	p_{A_{P^{\bigcirc}}}\cdot p_{{M}^{(T+1)}}(\sigma^k)  \begin{cases}
		\ge (\frac{n-1}{n})^{T+1} & \mbox{if $\sigma^k\in L_P$}\\
		=0 & \mbox{otherwise},
		\end{cases}
	\end{equation*} 
defining $L_P$ with $P\cdot n^{O(T)}$ basis states, and cut point $\frac{1}{2}\cdot \left(\frac{n-1}{n}\right)^{(T+1)}$ isolated by $\frac{1}{2}\cdot \left(\frac{n-1}{n}\right)^{(T+1)}$. Notice that, for large values of $n$, the \lqfa\ $A_{P^{\bigcirc}}\otimes {M}^{(T+1)}$ approximates a deterministic recognition of $L_P$.
\medbreak

\noindent {\bf Putting things together:}
We are now ready to suitably assemble the two \lqfas\ $A_T=A_{T^{\bigcirc}}\otimes \overline{M}^{(T+1)}$ and $A_P=A_{P^{\bigcirc}}\otimes {M}^{(T+1)}$ so far described 
to obtain an isolated cut point \lqfa\ $A_L$ for the unary regular language $L$.
We notice that $L=L_T\cup L_P=(L_T^c\cap L_P^c)^c$. This suggests first to construct \lqfas\ for $L_T^c$ and~$L_P^c$
by building $\overline{A}_{T}$ and $\overline{A}_{P}$ inducing the complement events $p_{\overline{A}_{T}}=1-p_{A_{T}}$ and $p_{\overline{A}_{P}}=1-p_{A_{P}}$, respectively. Next, to account for the intersection, we construct the \lqfa\ $\overline{A}_L=\overline{A}_{T}\otimes \overline{A}_{P}$ inducing the product event $p_{\overline{A}_{L}}=(1-p_{A_{T}})\cdot (1-p_{A_{P}})$. Finally, the desired \lqfa\ $A_L$ will be obtained by complementing $\overline{A}_L$, so that 
$p_{A_L}=(1-p_{\overline{A}_L})=1-(1-p_{A_{T}})\cdot (1-p_{A_{P}})=p_{A_{T}}+p_{A_{P}}-p_{A_{T}}\cdot p_{A_{P}}.$

Let us now explain how $p_{A_L}$ behaves on input string $\sigma^k$:
\begin{itemize}

\item {$\sigma^k\in L=L_T\cup L_P$}:\ \ Clearly, we have either $\sigma^k\in L_T$ or $\sigma^k\in L_P$. 
Suppose $\sigma^k\in L_T$. Then, we have that $p_{A_{T}}(\sigma^k)=1$ since $A_T=A_{T^{\bigcirc}}\otimes \overline{M}^{(T+1)}$ and both its sub-modules will accept with certainty; correspondingly, $p_{A_{P}}(\sigma^k)=0$ since $A_P=A_{P^{\bigcirc}}\otimes {M}^{(T+1)}$ and the sub-module ${M}^{(T+1)}$ accepts with~0 probability the input strings of length less than or equal~to~$T$. Globally, we have $p_{A_{L}}(\sigma^k)=1$.
Suppose $\sigma^k\in L_P$. Then, we have that $p_{A_{P}}(\sigma^k)\ge \left(\frac{n-1}{n}\right)^{T+1}$ since 
$A_{P^{\bigcirc}}$ accepts with certainty, while the sub-module ${M}^{(T+1)}$ accepts with probability not less than $\left(\frac{n-1}{n}\right)^{T+1}$. Let us now focus on~$A_T$. The sub-module $A_{T^{\bigcirc}}$ could accept with probability either 0 or 1. In the former case, globally we have $p_{A_{L}}(\sigma^k)\ge \left(\frac{n-1}{n}\right)^{T+1}$, in the latter, the sub-module $\overline{M}^{(T+1)}$ accepts with a probability bounded above by $1-\left(\frac{n-1}{n}\right)^{T+1}$. By letting $(1-y)$ the acceptance probability of $\overline{M}^{(T+1)}$, with $0\le y\le \left(\frac{n-1}{n}\right)^{T+1}$, we get
$p_{A_{L}}(\sigma^k)\ge\left(\frac{n-1}{n}\right)^{T+1}+(1-y)-\left(\frac{n-1}{n}\right)^{T+1}\cdot(1-y)\ge\left(\frac{n-1}{n}\right)^{T+1}.$

In conclusion, for any $\sigma^k\in L$, we have $p_{A_{L}}(\sigma^k)\ge\left(\frac{n-1}{n}\right)^{T+1}.$
 
 	\item {$\sigma^k\not\in L=L_T\cup L_P$}:\ \ Clearly, both $\sigma^k\not\in L_T$ and $\sigma^k\not\in L_P$.
By assuming $k\le T$, we must have $\sigma^k\not\in L_{T^\bigcirc}$. 	Therfore, the sole acceptance probability contribution could come from the module $A_P=A_{P^{\bigcirc}}\otimes {M}^{(T+1)}$. However, since $k\le T$, the sub-module ${M}^{(T+1)}$ accepts with 0 probability. 
So, $p_{A_L}(\sigma^k)=0$.
Instead, by assuming $k\ge T+1$, we must have that $\sigma^k\not\in L_{P^\bigcirc}$. Thus, the sole acceptance probability could come from the module $A_T$. However, the acceptance probability yielded by the sub-module~$\overline{M}^{(T+1)}$ turns out to be at most $1-\left(\frac{n-1}{n}\right)^{T+1}$. 

In conclusion, for any $\sigma^k\not\in L$, we have $p_{A_L}(\sigma^k)\le 1-\left(\frac{n-1}{n}\right)^{T+1}$.
	\end{itemize}
Summing up, the stochastic event induced by the \lqfa\ $A_L$ is
\begin{equation}\label{evL}
p_{A_L}(\sigma^k)\begin{cases}
\ge\left(\frac{n-1}{n}\right)^{T+1} & \text{if $\sigma^k\in L$}\\
\le 1-\left(\frac{n-1}{n}\right)^{T+1} & \text{otherwise}.
\end{cases}
\end{equation}
By the event in Equation (\ref{evL}), we get that $A_L$ recognizes $L$ with the following cut point and isolation~radius:
{\small
$$\lambda=\frac{1}{2}\cdot\left(\left(\frac{n-1}{n}\right)^{T+1}\!\!+1-\left(\frac{n-1}{n}\right)^{T+1}\right)=\frac{1}{2},\ \ \
\varrho=\frac{1}{2}\cdot\left(\left(\frac{n-1}{n}\right)^{T+1}\!\!\!-1+\left(\frac{n-1}{n}\right)^{T+1}\right)= \left(\frac{n-1}{n}\right)^{T+1}\!\!-\frac{1}{2}.$$}

\noindent Clearly, to have an isolation around $\lambda$, we must require that $\rho>0$. This can always be achieved on any  $T>0$ by imposing $\left(\frac{n-1}{n}\right)^{T+1}>\frac{1}{2}$, which is attained whenever 
$n>\frac{1}{1-\sqrt[T+1]{\frac{1}{2}}}$. This latter condition is satisfied, e.g., by letting $n=4\, T$ for any $T>0$. Nevertheless, the isolation radius $\rho$ tends to $\frac{1}{2}$ as $n$~grows. 
\medbreak

Let us 
inspect the size of the \lqfa\ $A_L=\overline{\overline{A}_T\otimes\overline{A}_P}$. As above pointed out, $A_T$ and $A_P$ have, respectively, $(T+1)\cdot n^{O(T)}$ and $P\cdot n^{O(T)}$ basis states. 
The complements $\overline{A}_T$ and $\overline{A}_P$ maintain the same number of basis states, while
the product $\overline{A}_T\otimes\overline{A}_P$ requires $((T+1)\cdot n^{O(T)})\cdot(P\cdot n^{O(T)})\le T\cdot P\cdot n^{O(T)}$ basis states. The final complement $\overline{\overline{A}_T\otimes\overline{A}_P}$ maintains the same number of basis states. By replacing $n$ with $4\,T$, as above suggested, the number of basis states of the isolated cut point \lqfa\ $A_L$ for $L$ becomes~$P\cdot T^{O(T)}$.

\section{Conclusions}\label{sec:conclusion}

In  this work, we have exhibited a modular framework for building isolated cut point \lqfas\ for unary regular languages. 
By suitably adapting to the unary case an inductive construction in \cite{Ambainis06,Mercer07}, we have first designed \lqfas\ discriminating unary inputs on the basis of their length.
These devices have then been plugged into two sub-modules recognizing the finite part and the ultimately periodic part
any unary regular language consists of. The resulting \lqfa\ recognizes a unary regular language $L$ with isolated  cut point~$\frac{1}{2}$, and a number of basis states which is exponential in the number of states of the minimal \dfa\ for~$L$. In spite of this exponential size blow up, it should be stressed that more restricted models of quantum finite automata in the literature, such as \mos, cannot recognize all unary regular languages. On the other hand, a linear amount of basis states is sufficient for the more powerful model of isolated cut point  \mms\ \cite{Bianchi10}. Thus, it would be worth investigating whether a more size efficient construction for unary \lqfas\ could be provided. 
Another interesting line of research might explore the descriptional power (see, e.g., \cite{BGMP17,HK11,KMMP20,KMMP20a} for topics in descriptional complexity)  of isolated cut point \lqfas\ with respect to other relevant classes of subregular languages such as, e.g.,~commutative~regular~languages~\cite{bianca}.
\medbreak

\noindent {\bf Acknowledgements.} The authors wish to thank the anonymous referees for their valuable comments.

\nocite{*}
\bibliographystyle{eptcs}
\bibliography{biblio}

\begin{thebibliography}{10}
\providecommand{\bibitemdeclare}[2]{}
\providecommand{\surnamestart}{}
\providecommand{\surnameend}{}
\providecommand{\urlprefix}{Available at }
\providecommand{\url}[1]{\texttt{#1}}
\providecommand{\href}[2]{\texttt{#2}}
\providecommand{\urlalt}[2]{\href{#1}{#2}}
\providecommand{\doi}[1]{doi:\urlalt{https://doi.org/#1}{#1}}
\providecommand{\eprint}[1]{arXiv:\urlalt{https://arxiv.org/abs/#1}{#1}}
\providecommand{\bibinfo}[2]{#2}

\bibitemdeclare{article}{Ambainis06}
\bibitem{Ambainis06}
\bibinfo{author}{Andris \surnamestart Ambainis\surnameend},
  \bibinfo{author}{Martin \surnamestart Beaudry\surnameend},
  \bibinfo{author}{Marats \surnamestart Golovkins\surnameend},
  \bibinfo{author}{Arnolds \surnamestart Kikusts\surnameend},
  \bibinfo{author}{Mark \surnamestart Mercer\surnameend} \&
  \bibinfo{author}{Denis \surnamestart Th{\'e}rien\surnameend}
  (\bibinfo{year}{2006}): \emph{\bibinfo{title}{Algebraic results on quantum
  automata}}.
\newblock {\slshape \bibinfo{journal}{Theory of Computing Systems}}
  \bibinfo{volume}{39}, pp. \bibinfo{pages}{165--188},
  \doi{10.1007/s00224-005-1263-x}.

\bibitemdeclare{article}{BGMP17}
\bibitem{BGMP17}
\bibinfo{author}{Zuzana \surnamestart Bedn\'arov\'a\surnameend},
  \bibinfo{author}{Viliam \surnamestart Geffert\surnameend},
  \bibinfo{author}{Carlo \surnamestart Mereghetti\surnameend} \&
  \bibinfo{author}{Beatrice \surnamestart Palano\surnameend}
  (\bibinfo{year}{2017}): \emph{\bibinfo{title}{Boolean language operations on
  nondeterministic automata with a pushdown of constant height}}.
\newblock {\slshape \bibinfo{journal}{Journal of Computer and System Sciences}}
  \bibinfo{volume}{90}, pp. \bibinfo{pages}{99 -- 114},
  \doi{10.1016/j.jcss.2017.06.007}.

\bibitemdeclare{article}{BMP14}
\bibitem{BMP14}
\bibinfo{author}{Maria~Paola \surnamestart Bianchi\surnameend},
  \bibinfo{author}{Carlo \surnamestart Mereghetti\surnameend} \&
  \bibinfo{author}{Beatrice \surnamestart Palano\surnameend}
  (\bibinfo{year}{2014}): \emph{\bibinfo{title}{Size lower bounds for quantum
  automata}}.
\newblock {\slshape \bibinfo{journal}{Theoretical Computer Science}}
  \bibinfo{volume}{551}, p. \bibinfo{pages}{102 – 115},
  \doi{10.1016/j.tcs.2014.07.004}.

\bibitemdeclare{article}{BMP17}
\bibitem{BMP17}
\bibinfo{author}{Maria~Paola \surnamestart Bianchi\surnameend},
  \bibinfo{author}{Carlo \surnamestart Mereghetti\surnameend} \&
  \bibinfo{author}{Beatrice \surnamestart Palano\surnameend}
  (\bibinfo{year}{2017}): \emph{\bibinfo{title}{Quantum finite automata:
  Advances on Bertoni’s ideas}}.
\newblock {\slshape \bibinfo{journal}{Theoretical Computer Science}}
  \bibinfo{volume}{664}, pp. \bibinfo{pages}{39--53},
  \doi{10.1016/j.tcs.2016.01.045}.

\bibitemdeclare{article}{Bianchi10}
\bibitem{Bianchi10}
\bibinfo{author}{Maria~Paola \surnamestart Bianchi\surnameend} \&
  \bibinfo{author}{Beatrice \surnamestart Palano\surnameend}
  (\bibinfo{year}{2010}): \emph{\bibinfo{title}{Behaviours of unary quantum
  automata}}.
\newblock {\slshape \bibinfo{journal}{Fundamenta Informaticae}}
  \bibinfo{volume}{104}, pp. \bibinfo{pages}{1--15}, \doi{10.3233/FI-2010-333}.

\bibitemdeclare{article}{brodsky2002characterizations}
\bibitem{brodsky2002characterizations}
\bibinfo{author}{Alex \surnamestart Brodsky\surnameend} \&
  \bibinfo{author}{Nicholas \surnamestart Pippenger\surnameend}
  (\bibinfo{year}{2002}): \emph{\bibinfo{title}{Characterizations of 1-way
  quantum finite automata}}.
\newblock {\slshape \bibinfo{journal}{SIAM Journal on Computing}}
  \bibinfo{volume}{31}, pp. \bibinfo{pages}{1456--1478},
  \doi{10.1137/S0097539799353443}.

\bibitemdeclare{article}{CMP21}
\bibitem{CMP21}
\bibinfo{author}{Alessandro \surnamestart Candeloro\surnameend},
  \bibinfo{author}{Carlo \surnamestart Mereghetti\surnameend},
  \bibinfo{author}{Beatrice \surnamestart Palano\surnameend},
  \bibinfo{author}{Simone \surnamestart Cialdi\surnameend},
  \bibinfo{author}{Matteo G.~A. \surnamestart Paris\surnameend} \&
  \bibinfo{author}{Stefano \surnamestart Olivares\surnameend}
  (\bibinfo{year}{2021}): \emph{\bibinfo{title}{An enhanced photonic quantum
  finite automaton}}.
\newblock {\slshape \bibinfo{journal}{Applied Sciences}} \bibinfo{volume}{11},
  p. \bibinfo{pages}{8768}, \doi{10.3390/app11188768}.

\bibitemdeclare{article}{HK11}
\bibitem{HK11}
\bibinfo{author}{Markus \surnamestart Holzer\surnameend} \&
  \bibinfo{author}{Martin \surnamestart Kutrib\surnameend}
  (\bibinfo{year}{2011}): \emph{\bibinfo{title}{Descriptional and computational
  complexity of finite automata - {A} survey}}.
\newblock {\slshape \bibinfo{journal}{Information and Computation}}
  \bibinfo{volume}{209}, pp. \bibinfo{pages}{456--470},
  \doi{10.1016/j.ic.2010.11.013}.

\bibitemdeclare{book}{hopcroft2001introduction}
\bibitem{hopcroft2001introduction}
\bibinfo{author}{John~E. \surnamestart Hopcroft\surnameend},
  \bibinfo{author}{Rajeev \surnamestart Motwani\surnameend} \&
  \bibinfo{author}{Jeffrey~D. \surnamestart Ullman\surnameend}
  (\bibinfo{year}{2006}): \emph{\bibinfo{title}{Introduction to Automata
  Theory, Languages, and Computation}}, \bibinfo{edition}{$3^{\rm rd}$}
  edition.
\newblock \bibinfo{publisher}{Addison-Wesley}.

\bibitemdeclare{book}{Hu92}
\bibitem{Hu92}
\bibinfo{author}{R.I.G. \surnamestart Hughes\surnameend}
  (\bibinfo{year}{1992}): \emph{\bibinfo{title}{The Structure and
  Interpretation of Quantum Mechanics}}.
\newblock \bibinfo{publisher}{Harvard University Press}.

\bibitemdeclare{inproceedings}{kondacs1997power}
\bibitem{kondacs1997power}
\bibinfo{author}{Attila \surnamestart Kondacs\surnameend} \&
  \bibinfo{author}{John \surnamestart Watrous\surnameend}
  (\bibinfo{year}{1997}): \emph{\bibinfo{title}{On the power of quantum finite
  state automata}}.
\newblock In: {\slshape \bibinfo{booktitle}{Proc.\ 38th Symp. on Found. Comp.
  Sci. (FOCS)}}, \bibinfo{publisher}{IEEE Computer Society}, pp.
  \bibinfo{pages}{66--75}, \doi{10.1109/SFCS.1997.646094}.

\bibitemdeclare{inproceedings}{KMMP20}
\bibitem{KMMP20}
\bibinfo{author}{Martin \surnamestart Kutrib\surnameend},
  \bibinfo{author}{Andreas \surnamestart Malcher\surnameend},
  \bibinfo{author}{Carlo \surnamestart Mereghetti\surnameend} \&
  \bibinfo{author}{Beatrice \surnamestart Palano\surnameend}
  (\bibinfo{year}{2020}): \emph{\bibinfo{title}{Deterministic and
  nondeterministic iterated uniform finite-state transducers: computational and
  descriptional power}}.
\newblock In: {\slshape \bibinfo{booktitle}{Proc.\ 16th Int. Conf. Comp. Europe
  (CiE 2020)}}, {\slshape \bibinfo{series}{LNCS}} \bibinfo{volume}{12098},
  \bibinfo{publisher}{Springer}, pp. \bibinfo{pages}{87--99},
  \doi{10.1007/978-3-030-51466-2_8}.

\bibitemdeclare{inproceedings}{KMMP20a}
\bibitem{KMMP20a}
\bibinfo{author}{Martin \surnamestart Kutrib\surnameend},
  \bibinfo{author}{Andreas \surnamestart Malcher\surnameend},
  \bibinfo{author}{Carlo \surnamestart Mereghetti\surnameend} \&
  \bibinfo{author}{Beatrice \surnamestart Palano\surnameend}
  (\bibinfo{year}{2020}): \emph{\bibinfo{title}{Iterated uniform finite-state
  transducers: descriptional complexity of nondeterminism and two-way motion}}.
\newblock In: {\slshape \bibinfo{booktitle}{Proc.\ 22nd Int.~Conf.~Des. Comp.
  Form. Sys. (DCFS 2020)}}, {\slshape \bibinfo{series}{LNCS}}
  \bibinfo{volume}{12442}, \bibinfo{publisher}{Springer}, pp.
  \bibinfo{pages}{117--129}, \doi{10.1007/978-3-030-62536-8_10}.

\bibitemdeclare{phdthesis}{Mercer07}
\bibitem{Mercer07}
\bibinfo{author}{Mark \surnamestart Mercer\surnameend} (\bibinfo{year}{2007}):
  \emph{\bibinfo{title}{Applications of Algebraic Automata Theory to Quantum
  Finite Automata}}.
\newblock Ph.D. thesis, \bibinfo{school}{McGill University, Montreal, Quebec,
  Canada}.

\bibitemdeclare{article}{MPC20}
\bibitem{MPC20}
\bibinfo{author}{Carlo \surnamestart Mereghetti\surnameend},
  \bibinfo{author}{Beatrice \surnamestart Palano\surnameend},
  \bibinfo{author}{Simone \surnamestart Cialdi\surnameend},
  \bibinfo{author}{Valeria \surnamestart Vento\surnameend},
  \bibinfo{author}{Matteo G.~A. \surnamestart Paris\surnameend} \&
  \bibinfo{author}{Stefano \surnamestart Olivares\surnameend}
  (\bibinfo{year}{2020}): \emph{\bibinfo{title}{Photonic realization of a
  quantum finite automaton}}.
\newblock {\slshape \bibinfo{journal}{Physical Review Research}}
  \bibinfo{volume}{2}, p. \bibinfo{pages}{013089},
  \doi{10.1103/PhysRevResearch.2.013089}.

\bibitemdeclare{article}{MC00}
\bibitem{MC00}
\bibinfo{author}{Cristopher \surnamestart Moore\surnameend} \&
  \bibinfo{author}{James~P. \surnamestart Crutchfield\surnameend}
  (\bibinfo{year}{2000}): \emph{\bibinfo{title}{Quantum automata and quantum
  grammars}}.
\newblock {\slshape \bibinfo{journal}{Theoretical Computer Science}}
  \bibinfo{volume}{237}, pp. \bibinfo{pages}{275--306},
  \doi{10.1016/S0304-3975(98)00191-1}.

\bibitemdeclare{article}{Par66}
\bibitem{Par66}
\bibinfo{author}{Rohit \surnamestart Parikh\surnameend} (\bibinfo{year}{1966}):
  \emph{\bibinfo{title}{On context-free languages}}.
\newblock {\slshape \bibinfo{journal}{J. {ACM}}} \bibinfo{volume}{13}, pp.
  \bibinfo{pages}{570--581}, \doi{10.1145/321356.321364}.

\bibitemdeclare{article}{Ra63}
\bibitem{Ra63}
\bibinfo{author}{Michael~O. \surnamestart Rabin\surnameend}
  (\bibinfo{year}{1963}): \emph{\bibinfo{title}{Probabilistic automata}}.
\newblock {\slshape \bibinfo{journal}{Information and Control}}
  \bibinfo{volume}{6}, pp. \bibinfo{pages}{230--245},
  \doi{{10.1016/S0019-9958(63)90290-0}}.

\bibitemdeclare{book}{shilov1971linear}
\bibitem{shilov1971linear}
\bibinfo{author}{Georgy~E. \surnamestart Shilov\surnameend}
  (\bibinfo{year}{1971}): \emph{\bibinfo{title}{Linear Algebra}}.
\newblock \bibinfo{publisher}{Prentice-Hall. Reprinted by Dover, 1977.}

\bibitemdeclare{techreport}{bianca}
\bibitem{bianca}
\bibinfo{author}{Bianca \surnamestart Truthe\surnameend}
  (\bibinfo{year}{2018}): \emph{\bibinfo{title}{Hierarchy of Subregular
  Language Families}}.
\newblock \bibinfo{type}{Technical Report},
  \bibinfo{institution}{Universit{\"a}tsbibliothek Gie{\ss}en, Institut f{\"u}r
  Informatik}, \doi{10.22029/JLUPUB-6984}.

\end{thebibliography}
\end{document}